\tikzstyle{startstop} = [rectangle, rounded corners, minimum width=1cm, minimum height=2cm,text centered, draw=black, fill=gray!30]
\tikzstyle{arrow} = [thick,->,>=stealth]
\newtheorem{mytheo}{Theorem}
\newtheorem{mylem}{Lemma}
\newtheorem{mycor}{Corollary}
\newtheorem{mydef}{Definition}
\newtheorem{myrem}{Remark}
\newtheorem{myexamp}{Example}
\newcommand{\norm}[1]{\left\lVert#1\right\rVert}
\title{Quantum Fourier Sampling is  Guaranteed to Fail to Compute Automorphism Groups of Easy Graphs}
\author{Omar Shehab\thanks{\texttt{shehab1@umbc.edu}. Corresponding author.}\hspace{0.25cm}}
\author{Samuel J. Lomonaco Jr.\thanks{\texttt{lomonaco@umbc.edu}}}
\affil{Department of Computer Science and Electrical Engineering

University of Maryland Baltimore County, Baltimore, MD}
\date{}
\begin{document}
\maketitle

\linespread{1.0}
\begin{abstract}
The quantum hidden subgroup approach is an actively studied approach to solve combinatorial problems in quantum complexity theory. With the success of the Shor's algorithm, it was hoped that similar approach may be useful to solve the other combinatorial problems. One such problem is the graph isomorphism problem which has survived decades of efforts using the hidden subgroup approach. This paper provides a systematic approach to create arbitrarily large classes of classically efficiently  solvable  graph automorphism problems or {\it easy} graph automorphism problems for which hidden subgroup approach is guaranteed to always fail irrespective of the size of the graphs no matter how many copies of coset states are used. As the graph isomorphism problem is believed to be at least as hard as the graph automorphism problem, the result of this paper entails that the hidden subgroup approach is also guaranteed to always fail for the arbitrarily large classes of graph isomorphism problems. Combining these two results, it is argued that the hidden subgroup approach is essentially a dead end and alternative quantum algorithmic approach needs to be investigated for the graph isomorphism and automorphism problems.
\end{abstract}
\newpage
\tableofcontents

\section{Introduction}
\label{sec:auto-hsp}
The main result of this paper is the Theorem ~\ref{theo:cycle-graph-auto-weak-fails} and the Corollary ~\ref{cor:cycle-graph-auto-strong-fails} where it is shown that both the weak and strong quantum Fourier samplings are guaranteed to always fail for the classically trivial cycle graph automorphism problem. It is also shown how to systematically determine the non-trivial classes of graphs for which quantum Fourier transform (QFT) always fails to construct the automorphism groups. Here, the term `non-trivial' refers to the classes of graph automorphism problems which can be solved trivially on a classical computer. This result puts an end to the decades long effort of finding a hidden subgroup algorithm for the graph isomorphism problem. Previously, the researchers have been proving increasing negative results which indicated that the probability of successfully deciding a graph isomorphism problem is exponentially small. This paper gives an algorithm to create arbitrarily large classes of graph automorphism problems for which the quantum hidden subgroup approach will always fail to compute the automorphism group no matter how big the computer or how small the size of the problem is. As the graph automorphism problem is Karp-reducible \cite{kobler2012graph} to the graph isomorphism problem, it can be inferred that there are classes of graph isomorphism problem for which hidden subgroup approach will always fail irrespective of the size of the problem. The linear representation theory of the dihedral groups plays a very important role in proving this result.

The general framework for the hidden subgroup problems was first formulated in \cite{brassard1997exact, hoyer2000quantum, mosca1999hidden}. The hidden subgroup problem can be defined as follows \footnote{This pedagogically convenient version of HSP has been borrowed from the presentation titled 'Graph isomorphism, the hidden subgroup problem and identifying quantum states' by Pranab Sen.}.

\begin{mydef}[Hidden subgroup problem]

{\bf Given:} $G:$ group, $S:$ set, $f : G \to S$ via an oracle.\\
{\bf Promise:} Subgroup $H \le G$ such that $f$ is constant on the left
cosets of $H$ and distinct on different cosets.\\
{\bf Task:} Find the hidden subgroup $H$ by querying $f$.
\end{mydef}

The hidden subgroup version of the graph isomorphism problem was first defined in \cite{jozsa1998quantum}.The $n$-vertex graph isomorphism problem for rigid graphs of  $n$ vertices can be expressed as a hidden subgroup problem over the ambient symmetric group $S_{2n}$ or more specifically the wreath product $S_n \wr \mathbb{Z}_2$ where the hidden subgroup is promised to be either trivial or of order two \cite{moore2010impossibility}.  The scheme and notation of the following definitions of the hidden subgroup problem, used in this paper, are borrowed from \cite{lomonacopers2002, van2012quantum}. Erd{\H{o}}s et al \cite{erdHos1963asymmetric} have shown that the automorphism groups of the most of the graphs are trivial.  So, although, the problem was defined for all simple undirected graphs in \cite{jozsa1998quantum}, this paper follows the example of \cite{grigni2001quantum} and limit the discussion to the rigid graphs with trivial automorphism groups. The rigidity of the graphs in this definition will be temporarily relaxed in Section ~\ref{sec:iso-auto} to prove reducibility.

\begin{mydef} [Graph isomorphism as a hidden subgroup problem ($\text{{\bf GI}}_{\text{HSP}}$)]
Let the $2 n$ vertex graph $\Gamma = \Gamma_1 \sqcup \Gamma_2$ be the disjoint union of the two rigid graphs $\Gamma_1$ and $\Gamma_2$ such that $Aut \left(\Gamma_1\right) = Aut \left(\Gamma_2\right) = \left\{e\right\}$. A map $\varphi : S_{2n} \to \text{Mat}\left(\mathbb{C}, N \right)$ \footnote{$\text{Mat}\left(\mathbb{C}, N \right)$ is the algebra of all $N \times N$ matrices over the complex numbers $\mathbb{C}$.} from the group $S_{2n}$ is said to have hidden subgroup structure if there exists a subgroup $H_\varphi$ of $S_{2n}$, called a hidden subgroup, an injection $\ell_\varphi : S_{2n}/H \to \text{Mat}\left(\mathbb{C}, N \right)$, called a hidden injection, such that the diagram
\[
\begin{tikzcd}
S_{2n} \arrow{r}{\varphi} \arrow{d}{\nu} & \text{Mat}\left(\mathbb{C}, N \right)\\
S_{2n}/H \arrow{ur}{\ell_\varphi}&
\end{tikzcd}
\]
is a commutative diagram, where $S_{2n}/H_{\varphi}$ denotes the collection of right cosets of $H_\varphi$ in $S_{2n}$, and where $\nu : S_{2n}/H_\varphi$ is the natural map of $S_{2n}$ onto $S_{2n}/H_\varphi$. The group $S_{2n}$ is called the ambient group and  the set $\text{Mat}\left(\mathbb{C}, N \right)$ is called the target set.

The hidden subgroup version of the graph isomorphism problem is to determine a hidden subgroup $H$ of $S_{2n}$ with the promise that $H$ is either trivial or $|H| = 2$.
\end{mydef}

This section also gives a formal definition for the hidden subgroup representation of the graph automorphism problem.

\begin{mydef} [Graph automorphism as a hidden subgroup problem ($\text{{\bf GA}}_{\text{HSP}}$)]
For a graph $\Gamma$ with $n$ vertices, a map $\varphi : S_{n} \to \text{Mat}\left(\mathbb{C}, N \right)$ \footnote{$\text{Mat}\left(\mathbb{C}, N \right)$ is the algebra of all $N \times N$ matrices over the complex numbers $\mathbb{C}$.} from the group $S_{n}$ is said to have hidden subgroup structure if there exists a subgroup $\text{Aut}\left(\Gamma\right)$ of $S_{n}$, called a hidden subgroup, an injection $\ell_\varphi : S_{n}/\text{Aut}\left(\Gamma\right) \to \text{Mat}\left(\mathbb{C}, N \right)$, called a hidden injection, such that for each $g \in \text{Aut}\left(\Gamma\right)$, $g \left(\Gamma\right) = \Gamma$ and, the diagram
\[
\begin{tikzcd}
S_{n} \arrow{r}{\varphi} \arrow{d}{\nu} & \text{Mat}\left(\mathbb{C}, N \right)\\
S_{n}/\text{Aut}\left(\Gamma\right) \arrow{ur}{\ell_\varphi}&
\end{tikzcd}
\]
is commutative, where $S_{n}/\text{Aut}\left(\Gamma\right)$ denotes the collection of right cosets of $\text{Aut}\left(\Gamma\right)$ in $S_{n}$, and where $\nu : S_{n}/\text{Aut}\left(\Gamma\right)$ is the natural map of $S_{n}$ onto $S_{n}/\text{Aut}\left(\Gamma\right)$.  $S_{n}$ is called  the ambient group and  $\text{Mat}\left(\mathbb{C}, N \right)$ is called  the target set.

The hidden subgroup version of the graph automorphism problem is to determine a hidden subgroup $\text{Aut}\left(\Gamma\right)$ of $S_{n}$ with the promise that $\text{Aut}\left(\Gamma\right)$ is either of trivial or non-trivial order depending on the type of $\Gamma$.
\end{mydef}

\subsection{Outline of the paper}
For the convenience of the readers, a brief outline of the paper is given here. The Section ~\ref{sec:Historical-context} gives an overview of the related literature, Section  ~\ref{sec:prem} provides the preliminary background needed to follow the discussion used in this paper, Section ~\ref{sec:qft-gi} provides the already known results on the hidden subgroup approach for graph isomorphism, Section ~\ref{sec:weak-cycle} presents the original result that the hidden subgroup approach is guaranteed to fail for an easy class of graph automorphism problem, and Section ~\ref{sec:qft-ga-plus} presents another original result which is a systematic way to build arbitrarily  large classes of graph automorphism problems for which hidden subgroup algorithms are guaranteed to fail. Finally, it has been discussed whether the hidden subgroup algorithms are the most appropriate ways to attempt combinatorial problems in quantum computation.

\subsection{Key technical ideas}
This section summarizes the key technical ideas used to prove the main results of this paper.  The automorphism group of a cycle graph is the dihedral group $D_n$ of order $2 n$. It has been shown in  Lemma ~\ref{lem:d-n-1-d-rep-prob-zero} that the probability of measuring the labels of one dimensional irreducible representations of $D_n$ is zero for non-trivial representations. Then, in Lemma ~\ref{lem:d-n-2-d-rep-prob-zero},  it has been shown that the probability of measuring the labels of two dimensional irreducible representations of $D_n$ is always zero. Combining these two lemmas, it has been proved in Theorem ~\ref{theo:cycle-graph-auto-weak-fails} that Weak quantum Fourier sampling always fails to solve the cycle graph automorphism problem irrespective of its size. Finally, this paper gives Algorithm ~\ref{algo:arb-fail} to create arbitrarily large class of graph automorphism problems (with rotational symmetries) for which quantum Fourier transform is guaranteed to always fail irrespective of the size of the graphs.

It was already proved in \cite{grigni2001quantum} that single coset state can provide only exponentially less information for the graph isomorphism problem. Hence, later works, for example  \cite{hallgren2010limitations, moore2010impossibility}, investigated the possibility of using multiple copies of coset states. The current work gives a stronger result on the graph automorphism problem for the single coset state. Moreover, all the multi-coset state algorithms are conditioned on the successful execution of weak sampling, which has been proved in this paper to fail with guarantee for the problem of interest. So, the case of failure can also be inferred for multi-coset approaches, e.g., the sieve algorithms. To summarize, if there are rotational symmetries in the graph, we will not get exponentially less information with a single copy coset state rather, but, even worse, we will get exactly zero amount of information no matter how large the quantum computer is. Same would be true for k-copy coset states.

\section{Historical context}
\label{sec:Historical-context}
Read et al. \cite{Read1977} have named the tendency of incessant but unsuccessful attempts at the graph isomorphism problem as the {\it graph isomorphism disease}. This indicates the amount of interest about the problem among the researchers. For almost three decades, until $2015$, the  best known algorithm for the general graph isomorphism problem has been due to Babai et al. \cite{babai1983canonical}. The algorithm exploits graph canonization techniques through label reordering in exponential time (\ensuremath{exp \left(n^{\frac{1}{2} + o\left(1\right)}\right)}), where \ensuremath{n = |V|}. Faster algorithms have been proposed for graph sub classes with special properties. In \cite{babai1983canonical}, Babai et al. also proved the bound for tournament graphs is \ensuremath{n^{\left(\frac{1}{2} + o\left(1\right)\right) \log n}}. In \cite{luks1982isomorphism}, Luks reduced the bounded valence graph isomorphism problem to the color automorphism problem, and gave a polynomial time algorithm. In another paper \cite{babai1982isomorphism}, Babai et al. created two polynomial algorithms using two different approaches, i.e., the tower of groups method, and the recursion through systems of imprimitivity respectively, for the bounded eigenvalue multiplicity graph isomorphism problem. The isomorphism problem for planar graphs is known to be in polynomial time due to Hopcroft et al. \cite{hopcroft1974linear}. In their paper, the authors used a reduction approach to eventually tranform the graphs into five regular polyhedral graphs and check the isomorphism by exhaustive matching in a fixed finite time. Miller \cite{miller1980isomorphism} used a different approach by finding minimal embeddings of the graphs of bounded genus and checking their isomorphism by generating codes. Babai et al. in \cite{babai1980random} and Czajka et al. in \cite{czajka2008improved} showed that the isomorphism of almost all the graphs in a class of random graphs can be tested in linear time. Both of their approaches exploit the properties of the degree sequence of a random graph. Babai et al. \cite{babai2013faster} proved that while the graph isomorphism problem for strongly regularly graphs may be solved faster than the general version it is still an exponential time algorithm. A series of dramatic events took place recently between $2015$ and $2017$ in the field of graph isomorphism. In December, $2015$ \cite{babai2015graph}, Babai posted a pre-print claiming that the general graph isomorphism problem can be solved in quasipolynomial time. One of the authors of this papers was fortunate enough to witness a live proof session of the algorithm by Babai in Discrete Mathematics $2016$. Two years later, Helfgott \cite{helfgott2017isomorphismes} pointed out a serious error in that proof. Babai immediately fixed the proof and graph isomorphism still remains in quasipolynomial time.

Although there is a quasi-polynomial time algorithm for the general graph isomorphism problem, it is not proven to be optimal. So, the complexity class of the graph isomorphism problem is yet undecided. While it is known that the problem is in {\bf NP} \cite{garey2002computers}, it is not known whether the problem is in {\bf P} or {\bf NP}-complete. This is why the graph isomorphism problem is called an {\bf NP}-intermediate problem. Sch{\"o}ning \cite{schoning1988graph} has shown that graph isomorphism is in \ensuremath{L^P_2} and not \ensuremath{\gamma}-complete under the assumption that the polynomial hierarchy does not collapse to \ensuremath{L^P_2}. Given this information, many researchers believe that the graph isomorphism problem is not {\bf NP}-complete.

While the efforts towards finding an efficient solution for the general graph isomorphism problems have been unsuccessful, the researchers have attempted practically feasible methods to solve the problem in reasonable time frame.

The hidden subgroup approach for both the graph isomorphism and automorphism problems require the computing of the quantum Fourier sampling of the ambient symmetric group. This has been an active area of research since Peter Shor invented the famous Shor's algorithm, a quantum hidden subgroup algorithm for the abelian groups,  to solve prime factorization \cite{shor1999polynomial}. While at this moment, there is no known efficient quantum hidden subgroup algorithm for symmetric groups, researchers have shed some light on why it had been so difficult to find them.

While surveys like \cite{RevModPhys.82.1}, summarizes the advances made so far in the area of hidden subgroup algorithms, it would always be helpful to review the negative results in this section to illustrate why this is a difficult problem. It is noteworthy that all the positive results, so far, have been demonstrated for the synthetically created product groups. While this approach may not have immediate practical application, this idea of creating synthetic groups has been used in this paper to generalize results. One of the first results for the non-abelian hidden subgroup problems was presented by Roetteler et al \cite{roetteler1998polynomial}. In that paper, the authors proved an efficient hidden subgroup algorithm for the wreath product $\mathbb{Z}^k_2 \wr \mathbb{Z}_2$ which is a non-abelian group. Similarly, Ivanyos et al \cite{ivanyos2003efficient} proved the existence of an efficient hidden subgroup algorithm for a more general non-abelian nil-$2$ groups. Later Friedl et al  \cite{friedl2003hidden} generalized the result such that there are efficient hidden subgroup algorithms for the groups whose
derived series have constant length and whose Abelian factor groups are each the
direct product of an Abelian group of bounded exponent and one of polynomial
size. Ettinger et al \cite{ettinger2000quantum} showed that it is possible to reconstruct a subgroup hidden inside the dihedral group using finite number of queries. This result was later generalized by Ettinger et al \cite{ettinger1999hidden} that arbitrary groups may be reconstructed using finite queries but they did not give any specific set of measurement.

In \cite{moore2002hidden}, Moore et al proved that although weak quantum Fourier sampling fails to determine the hidden subgroups of the non-abelian groups of the form $\mathbb{Z}_q \ltimes \mathbb{Z}_p$, where $q \mid \left(p-1\right)$ and $q = p / \text{polylog}\left(p\right)$, strong Fourier sampling is able to do that. Later on, Moore et al  \cite{moore2006generic, moore2005explicit}  proved the existence of $polylog \left(|G|\right)$ sized quantum Fourier circuits for the groups like $S_n$, $H \wr S_n$, where $|H| = \text{poly}\left(n\right)$, and the Clifford groups. The authors also gave the circuits of subexponential size for standard groups like $\text{GL}_n \left(q\right)$, $\text{SL}_n \left(q\right)$, $\text{PGL}_n \left(q\right)$, and $\text{PSL}_n \left(q\right)$, where $q$ is a fixed prime power. Moore et al  \cite{moore2008symmetric} have also presented a stronger result where they have shown that it is not possible to reconstruct a subgroup hidden inside the symmetric group with strong Fourier sampling and both arbitrary POVM and entangled measurement. At the same time, the authors did not rule out the possibility of success using other possible measurements which is still an open question. Bacon et al \cite{bacon2005optimaldi} proved that the so called {\it pretty good measurement} is optimal for the dihedral hidden subgroup problem. Moore et al  \cite{moore2005distinguishing} extended this result for the case where the hidden subgroup is a uniformly random conjugate of a given subgroup. Moore et al \cite{moore2007power} eventually proved a more general results that strong quantum Fourier sampling can reconstruct $q$-hedral groups. Alagic et al \cite{alagic2005strong} proved a general result that strong Fourier sampling fails to distinguish the subgroup of the power of a given non-abelian simple group. Moore et al \cite{moore2005tight} later proved that arbitrary entangled measurement on $\Omega \left(n \log n\right)$ coset states is necessary and sufficient to extract non-negligible information. Similar result was also proved in \cite{hallgren2010limitations} separately. Few years later, Moore et al \cite{moore2010impossibility} proved a negative result that the {\it quantum sieve algorithm}, i.e. highly entangled measurements across $\Omega(n \log n)$ coset states, cannot solve the graph isomorphism problem.

It is important to point out that all the groups used in the previously mentioned results are conveniently chosen and synthetically created. Moreover, they are sporadic so it is not clear how the knowledge can be extrapolated to the symmetric groups. As the graph automorphism problem is Karp-reducible to the graph isomorphism problem, it is believed to be sufficient to investigate the hidden subgroup representation of the graph isomorphism problem. With all these unsuccessful attempts for the last couple of decades presented above, one may ask whether the hidden subgroup approach is the right way to attempt the graph isomorphism problem. If it is, there would have been a Karp-reduction from the hidden subgroup representation of the graph automorphism problem to the hidden subgroup representation of the graph isomorphism problem. This paper gives one such reduction in Section ~\ref{sec:iso-auto}. So, another way of looking at the problem is to understand the hidden subgroup  complexity of the graph automorphism problem and reduce the results to graph isomorphism.

\section{Preliminaries}
\label{sec:prem}
\subsection{Graph Theory}
Most of the work presented in this paper involves the graph isomorphism and automorphism problems. So, it would be appropriate to start the background section with a few concepts of graph theory.  The materials in this section are reproduced from the very well written book by Bollob\'{a}s \cite{bollobas2013modern}. The section does not contain a comprehensive coverage on graph theory, rather they are only related to the discussion of this paper.

\begin{mydef} [Graph]
A graph $\Gamma$ is an ordered pair of disjoint sets $\left(V, E\right)$ such that $E$ is a subset of the set $V^{\left(2\right)}$ of unordered pairs of $V$.
\end{mydef}

$V$ is known as the set of vertices, and $E$ is known as the set of edges. Each element of $E$ connects two elements of $V$. A graph is directed if the edge $\left(v_i, v_j\right)$ is an element of $E$ but $\left(v_j, v_i\right)$ is not for all $i$ and $j$. A simple graph does not have loops or multi-edges. This paper only focuses on questions defined on simple undirected graphs.

\subsubsection{Graph isomorphism and automorphism}
\label{sec:iso-auto}
The graph isomorphism and automorphism problems are the two of the oldest problems in combinatorics. The formal statement of the graph isomorphism problem goes as follows as mentioned in \cite{fortin1996graph}. 

\begin{mydef} [Graph isomorphism ({\bf GI})]
\label{prob:gi}
Given two graphs, \ensuremath{\Gamma_1 = \left(V_1, E_1\right)} and \ensuremath{\Gamma_2 = \left(V_2, E_2\right)},  does there exist a bijection \ensuremath{f : V_1 \to V_2} such that \ensuremath{\forall a, b \in V_1, \left(a, b\right) \in E_1 \iff \left(f\left(a\right), f\left(b\right)\right) \in E_2}?
\end{mydef}
Here, \ensuremath{V_1} and \ensuremath{V_2} are the sets of vertices and \ensuremath{E_1} and \ensuremath{E_2} are the sets of edges of \ensuremath{\Gamma_1} and \ensuremath{\Gamma_2} respectively.

The graph automorphism problem is a special version of Definition \ref{prob:gi} when $\Gamma_1 = \Gamma_2$.  

\begin{mydef} [Graph automorphism ({\bf GA})]
\label{prob:ga}
Given a graph $\Gamma = \left(V, E\right)$, compute the automorphism groups which are $\Gamma \to \Gamma$ isomorphisms; and form the subgroup $\text{Aut}\left(\Gamma\right)$ of the symmetric group $S_{|V|}$.
\end{mydef}

The reducibility from {\bf GA} to {\bf GI} is discussed in the rest of this section based on a few theorems proven in \cite{kobler2012graph}.  The outline is as follows. First, it has been shown that {\bf GA} is Turing-reducible to {\bf GI} i.e. $\text{\bf GA} \le^p_T \text{\bf GI}$ using Algorithm ~\ref{algo:ga-gi-turing-reduction} which is the Example 1.10 of \cite{kobler2012graph}. 

\begin{algorithm}[H]
\caption{$GA \le^p_T GI$}
     \label{algo:ga-gi-turing-reduction}
\begin{algorithmic}[1]
\Procedure {GA-GI-Turing-Reduction}{$\Gamma, n$} \Comment{graph $\Gamma$ with $n$ nodes}

\For{$i\gets 1, n-1 $}
\For{$j\gets i+1, n $}
\If{$\left(\Gamma_{[i]}, \Gamma_{[j]}\right) \in GI$} \Comment{$\Gamma_{[i]}$ denotes a copy of the graph $\Gamma$ with a label attached with node $i$}
\State {\bf accept}
\EndIf
\EndFor
\EndFor
\State {\bf reject};
\EndProcedure
\end{algorithmic}
\end{algorithm}

Then, it has been proven that {\bf GA} is Karp-reducible to {\bf GI}. Instead of reproducing the detailed proof from \cite{kobler2012graph}, this section provides a sketch of it. First, it needs to be shown that {\bf GA} has a polynomial time computable {\it or}-function. Then, it has to be shown that {\bf GI} has both polynomial time computable {\it and-} and {\it or}-functions. Combining these results, it can be argued that {\bf GA} is Karp-reducible to {\bf GI} i.e. $\text{\bf GA} \le^p_m \text{\bf GI}$.

At this point, it is natural to ask whether $\text{\bf GA}_{HSP} \le^p_T \text{\bf GI}_{HSP}$ or ${\bf GA}_{HSP} \le^p_m {\bf GI}_{HSP}$.

It can be trivially shown that the hidden subgroup representation of the graph automorphism problem is Turing-reducible to the graph isomorphism problem by giving two input graphs as the original graph and the candidate automorphism of the original graph. The technique to prove the Karp-reducibility from $\text{\bf GA}_{HSP}$ to $\text{\bf GI}_{HSP}$ was kindly shown to the authors in a public forum by Grochow \cite{Grochow2017}. The sketch of the algorithm is given below.

First, the condition on the the rigidity of the input graphs is relaxed. This makes the case harder.  Instead, the (non-rigid) {\bf GI} as an HSP is described in the same way, but now the goal is to determine the size of the hidden subgroup, or a generating set. The difference between the isomorphic and non-isomorphic cases will be a factor of $2$ in the order of the hidden subgroup. If  the problem is expressed as finding generators of the hidden subgroup, then the question is whether any generator switches $\Gamma_1$ and $\Gamma_2$.

Now, an instance of $\text{\bf GA}_{HSP}$ corresponding to a graph $\Gamma$ is given by the function from $S_n \to M_n(\mathbb{C})$ defined by $f(\pi) = A(\pi(\Gamma))$ where $A(\cdot)$ denotes the adjacency matrix. In particular, $f(e) = A(\Gamma)$. Then the usual Karp reduction is applied from {\bf GA} to {\bf GI} to get a pair of graphs $\Gamma_1, \Gamma_2$. Then an instance of $\text{\bf GI}_{HSP}$ instance, of the type described in the preceding paragraph, can be created corresponding to the pair $\Gamma_1, \Gamma_2$ (that is, the disjoint union $\Gamma_1 \cup \Gamma_2$). Thus ${\bf GA}_{HSP} \le^p_m {\bf GI}_{HSP}$.

\subsection{Representation Theory}
A few concepts of the representation theory are discussed in the current section which are relevant to this paper. The discussion is limited to the representation theory of symmetric groups. A more detailed introduction may be found in \cite{fulton1991representation, curtis1966representation, sagan2013symmetric}. The paper has borrowed the notations and definitions of representation theory generously from the above mentioned standard resources.

\begin{mydef} [Matrix representations \cite{sagan2013symmetric}]
A matrix representation of a group $G$ is a group homomorphism
\begin{align}
X : G \to GL_d.
\end{align}
Equivalently, to each $g \in G$ is assigned $X \left(g\right) \in \text{Mat}_d$ such that
\begin{itemize}
\item $X \left(e\right) = I$ the identity matrix, and
\item $X \left(g h\right) = X \left(g\right) X \left(h\right)$ for all $g, h \in G$.
\end{itemize}
The parameter $d$ is called the degree, or dimension, of the representation and is denoted by $\text{deg } X$.
\end{mydef}

\begin{mydef} [Young diagram]
For any partition $\lambda_1, \ldots, \lambda_k$ of an integer $\lambda$, there is a diagram associated called the Young diagram where there are $\lambda_i$ cells in the $i$-th row. The cells are lined up on the left.
\end{mydef}

\begin{mydef} [Restricted and induced representations]
If $H \subset G$ is a subgroup, any representation $\rho_1$ of $G$ restricts to a representation of $H$, denoted $Res^G_H \rho_1$ or simple $Res \rho_1$. Let $\rho_2 \subset \rho_1$ be a subspace which is $H$-invariant. For any $g$ in $G$, the subspace $g . \rho_2 = \left\{g . w : w \in \rho_2\right\}$ depends only on the left coset of $g H$ of $g$ modulo $H$, since $g h . W = g . \left(h . \rho_2\right) = g. \rho_2$; for a coset $c$ in $G/H$,  $c . \rho_2$ is the subspace of $\rho_1$ subspace of $\rho_1$. $\rho_1$ is induced by $\rho_2$ if every element in $\rho_1$ can be written uniquely as a sum of elements in such translates of $\rho_2$, i.e. 
\begin{align}
\rho_1 &= \bigoplus_{c \in G/H} c . \rho_2
\end{align}
In this case, the induced representation is $\rho_1 = Ind^G_H \rho_2$  = Ind $\rho_2$.
\end{mydef}

A common representation to be seen in later sections of this report is the regular representation \cite{fulton1991representation}.

\begin{mydef} [Regular representation]
If $X$ is any finite set and $G$ acts on the left on $X$, i.e., $G \to Aut \left(X\right)$ is a homomorphism to the permutation group of $X$, there is a associated permutation representation: let $V$ be the vector space with basis $\left\{e_x: x \in X\right\}$, and let $G$ act on $V$ by
\begin{align}
g \cdot \sum a_x e_x = \sum a_x e_{g x}.
\end{align}
The regular representation, denoted $R_G$ or $R$, corresponds to the left action of $G$ on itself.
\end{mydef}

The character of a group element is defined as follows \cite{fulton1991representation}.

\begin{mydef} [Character]
If $\rho$ is a representation of a group $G$, its character $\chi_\rho$ is the complex-valued function on the group defined by 
\begin{align}
\chi_\rho \left(g\right) &= Tr \left(g|_\rho\right),
\end{align}
the trace of $g$ on $\rho$.
\end{mydef}

It is also useful to define the inner product of characters \cite{sagan2013symmetric}.

\begin{mydef}[Inner product of characters]
\label{def:inner-prod-char}
Let $\chi$ and $\psi$ be the characters of a group $G$. The {\it inner product} of $\chi$ and $\Psi$ is
\begin{align}
\langle \chi, \Psi\rangle &= \frac{1}{|G|} \sum_{g\in G} \chi \left(g\right) \Psi^\dagger \left(g\right)
\end{align}
\end{mydef}

The character table of a finite group  is defined as follows \cite{sagan2013symmetric}.

\begin{mydef} [Character table]
Let $G$ be a group. The {\it character table} of $G$ is an array with rows indexed by the inequivalent irreducible characters of $G$  and columns indexed by the conjugacy classes. The table entry in row $\chi$ and column $K$ is $\chi_K$:

\begin{tabular}{ c | c c c }
   & \ldots & $K$ & \ldots \\
  \hline
  \vdots &   & \vdots & \\
$\chi$ & \ldots   & $\chi_K$ & \\
\vdots &    &  & \\
\end{tabular}

By convention, the first row corresponds to the trivial character, and the first column corresponds to the class of the identity, $K = \left\{e\right\}$.
\end{mydef}

Two equivalent procedures are provided for computing the character table of any symmetric group $S_n$ improvising from \cite{Gillespie2012}.

The most straight forward way \cite{sagan2013symmetric} to compute the character table is given in Algorithm ~\ref{algo:char-tab-sagan}.

\begin{algorithm}[H]
\caption{{\bf CHARACTER-TABLE-SAGAN}}
     \label{algo:char-tab-sagan}
\begin{algorithmic}[1]
\Procedure {CHARACTER-TABLE-SAGAN}{$S_n$}
%\Comment{Input}
\State Determine all the partitions of $n$ which will also infer the conjugacy classes.

\State Enumerate all group elements and cluster them based on their cycle types. These clusters will coincide with the conjugacy classes.

\State For each class, compute the irreducible representation for each group element.

\State For each class, determine the character of the irreducible representation. All group elements of the same cycle type will have the same character.

\State Populate the table with the characters following the prescribed order of the partitions for both column and rows.
\EndProcedure
\end{algorithmic}
\end{algorithm}

Enumeration of conjugacy classes becomes tedious when  groups larger than $S_5$ are being considered. By using the {\it Murnaghan-Nakayama rule} , The process can be simplified even for larger groups \cite{stanley1986enumerative} as shown by \cite{Gillespie2012} in Algorithm ~\ref{algo:char-tab-mn-rule}.

\begin{algorithm}[H]
\caption{{\bf CHARACTER-TABLE-GILLESPIE}}
     \label{algo:char-tab-mn-rule}
\begin{algorithmic}[1]
\Procedure {CHARACTER-TABLE-GILLESPIE}{$S_n$}
%\Comment{Input}
\State   The  conjugacy classes of $S_n$ are the permutations having a fixed number of cycles of each length, corresponding to a partition of $n$ called the shape of the permutation. Since, the characters of a group are constant on its conjugacy classes,  index the columns of the character table by these partitions. The partitions are arranged in an increasing order.

\State There are precisely as many irreducible characters as conjugacy classes, so the irreducible characters can be indexed by the partitions of $n$. Represent each partition as a Young diagram and write them, or the characters directly down, the left of the table in a decreasing order of the partitions.

\Comment{The Murnaghan-Nakayama Rule}
\State  Calculate the entry in row $\lambda$  and column $\mu$. Define a filling of $\lambda$ with content $\mu$ to be a way of writing a number in each square of $\lambda$ such that the numbers are weakly increasing along each row and column and there are exactly $\mu_i$ squares labeled $i$ for each $i$.

\State Consider all fillings of $\lambda$ with content $\mu$ such that for each label $i$, the squares labeled $i$ form a connected skew tableaux that does not contain a $2 \times 2$ square. Such a tableaux is called a {\it border-strip tableaux}.

\State For each label in the tableau, define the height of the corresponding border strip to be one less than the number of rows of the border strip. Weight the tableau by $\left(-1\right)^s$ where $s$ is the sum of the heights of the border strips that compose the tableau.

\State The entry in the character table is simply the sum of these weights.
\EndProcedure
\end{algorithmic}
\end{algorithm}

Now, the concepts of hook is introduced which is used in computing characters of group representations.

\begin{mydef} [Hook]
For a cell $\left(i, j\right)$ of a Young tableau $\lambda$, the $\left(i, j\right)$-hook $h_{i, j}$ is the collection of all cells of $\lambda$ which are beneath $\left(i, j\right)$ (but in the same column) or t the right of $\left(i, j\right)$ (but in the same row), including the cell $\left(i, j\right)$. The {\it length} of the hook is the number of cells appearing in the hook.
\end{mydef}

\begin{mydef} [Skew hook]
A skew hook $s$ of a Young diagram $\lambda$ is a connection collection of boundary boxes such that their removal from $\lambda$ results in a (smaller) diagram.
\end{mydef}

The Murnaghan-Nakayama rule \cite{stanley1986enumerative} is given below.

\begin{mytheo}[The Murnaghan-Nakayama rule]
\label{theo:mnrule}
Let $c$ be a permutation with cycle structure $\left(c_1, \ldots, c_t\right)$, $c_1 \ge \ldots \ge c_t$. Then

\begin{align}
\chi_\lambda \left(c\right) &= \sum_{s_1, \ldots, s_t} \left(-1\right)^{v\left(s_1\right)} \ldots \left(-1\right)^{v\left(s_t\right)},
\end{align}

where each $s_i$ is a skew hook of length $c_i$ of the diagram or partition $\lambda$ after $s_1, \ldots, s_{i - 1}$ have been removed, and $v \left(s_i\right)$ denotes the number of vertical steps in $s_i$.
\end{mytheo}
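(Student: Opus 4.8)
The plan is to route the proof through the theory of symmetric functions, using the Frobenius characteristic map, which identifies each irreducible character $\chi_\lambda$ with the Schur function $s_\lambda$ and converts character values into expansion coefficients in the Schur basis. Concretely, writing $\mu = (c_1, \ldots, c_t)$ for the cycle type of $c$ and $p_\mu = p_{c_1} p_{c_2} \cdots p_{c_t}$ for the associated power-sum symmetric function, the starting point is the classical identity
\[
p_\mu = \sum_{\lambda \vdash n} \chi_\lambda(c)\, s_\lambda,
\]
so that $\chi_\lambda(c)$ is precisely the coefficient of $s_\lambda$ in $p_\mu$, equivalently $\chi_\lambda(c) = \langle p_\mu, s_\lambda\rangle$ under the Hall inner product. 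The whole task therefore reduces to understanding how a product of power sums decomposes into Schur functions.

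The engine of the argument is a single-power-sum rule, which I would isolate as a lemma: for any partition $\nu$ and any positive integer $r$,
\[
p_r\, s_\nu = \sum_{\lambda}(-1)^{\mathrm{ht}(\lambda/\nu)}\, s_\lambda,
\]
where the sum runs over all $\lambda \supseteq \nu$ for which $\lambda/\nu$ is a border strip (skew hook) of size $r$, and $\mathrm{ht}$ is its height, one less than its number of rows. Granting the lemma, the theorem follows by induction on $t$: writing $p_\mu = p_{c_1}\bigl(p_{c_2}\cdots p_{c_t}\bigr)$ and applying the lemma repeatedly peels the cycles off one at a time, each removal of a skew hook $s_i$ of length $c_i$ contributing the sign $(-1)^{\mathrm{ht}(s_i)}$. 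Because the height of a skew hook equals its number of vertical steps, one has $(-1)^{\mathrm{ht}(s_i)} = (-1)^{v(s_i)}$, and summing over every admissible sequence $s_1, \ldots, s_t$ of successive border-strip removals reproduces the stated formula exactly; the border-strip tableaux of the character-table algorithm are the bookkeeping device recording these removals.

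For the lemma itself I would use the bialternant (Frobenius) presentation of Schur functions, $s_\lambda = a_{\lambda+\delta}/a_\delta$, where $\delta = (m-1, m-2, \ldots, 1, 0)$ and $a_\alpha = \det\bigl(x_i^{\alpha_j}\bigr)$ is the antisymmetrized monomial. The clean computational fact is that multiplying an antisymmetrizer by a power sum merely shifts one exponent: $p_r\, a_\beta = \sum_k a_{\beta + r e_k}$, where $e_k$ is the $k$-th coordinate vector. Applying this with $\beta = \nu+\delta$ gives
\[
p_r\, a_{\nu+\delta} = \sum_{k} a_{\nu+\delta + r e_k} = \sum_{\lambda}(-1)^{\mathrm{ht}(\lambda/\nu)}\, a_{\lambda+\delta},
\]
after which dividing by $a_\delta$ yields the lemma. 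Each term $a_{\nu+\delta+re_k}$ either vanishes (when two exponents collide) or re-sorts to strictly decreasing order $\lambda+\delta$ with a sign equal to $\mathrm{sgn}$ of the sorting permutation.

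The main obstacle is the sign-and-support bookkeeping in this last step, everything else being routine induction. I expect two delicate points: first, showing that the surviving, nonzero terms are exactly those for which the exponent shift corresponds to adjoining a connected skew shape with no $2\times 2$ block, that is, a genuine border strip of size $r$ to $\nu$; and second, verifying that the re-sorting sign $\mathrm{sgn}(\sigma)$ equals $(-1)^{\mathrm{ht}(\lambda/\nu)}$, i.e.\ that the number of exponents the shifted entry must jump over coincides with the number of rows the border strip spans minus one, which is precisely its vertical-step count $v(s_i)$. Matching these two combinatorial descriptions of the sign is where the proof must be carried out with care; an alternative is a sign-reversing involution on the lattice-path data encoding the exponent sequences, but either route ultimately rests on the same identification of height with crossing number.
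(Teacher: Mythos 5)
The paper does not actually prove Theorem~\ref{theo:mnrule}: it is quoted verbatim from Stanley's \emph{Enumerative Combinatorics} as a known classical result, so there is no in-paper argument to compare yours against. That said, your route is the standard one --- indeed essentially the proof in the very source the paper cites: identify $\chi_\lambda(c)$ with $\langle p_\mu, s_\lambda\rangle$ via the Frobenius characteristic, reduce to the single-strip Pieri-type rule $p_r\, s_\nu = \sum_\lambda (-1)^{\mathrm{ht}(\lambda/\nu)} s_\lambda$, and prove that rule from $s_\lambda = a_{\lambda+\delta}/a_\delta$ together with $p_r\, a_\beta = \sum_k a_{\beta + r e_k}$ (which follows by matching the variable index $j$ in $p_r=\sum_j x_j^r$ with the position $k=\sigma^{-1}(j)$ in each monomial of the antisymmetrizer). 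The outline is correct, and the two points you flag as delicate are exactly where the content lies; both are handled cleanly by passing to the beta-set $\{\nu_i + m - i\}$: adding $r$ to one beta-number yields the beta-set of a partition $\lambda$ precisely when $\lambda/\nu$ is a border strip of size $r$ (collisions of exponents correspond to the shifted entry landing on an occupied beta-number, which kills $a_{\nu+\delta+re_k}$), and the number of occupied beta-numbers the shifted entry jumps over equals the number of rows the strip spans minus one, giving $\mathrm{sgn}(\sigma)=(-1)^{\mathrm{ht}(\lambda/\nu)}=(-1)^{v(s_i)}$. One small bookkeeping caveat for the induction: multiplying $p_{c_1}\cdots p_{c_t}$ onto $s_\emptyset$ builds $\lambda$ by \emph{adding} strips, whereas the theorem is phrased in terms of successive \emph{removals} from $\lambda$; these are the same sum read in opposite directions, and since the $p_{c_i}$ commute the order of the parts is immaterial, but you should say so explicitly when writing this up.
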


The semidirect product is defined as follows \cite{dummit2004abstract}.

\begin{mydef} [Semidirect product]
Let $H$ and $K$ be groups, with $K$ acting on $H$ via an action $\phi : K \to Aut \left(H\right)$. The multiplication operation is defined as follows.
\begin{align}
\left(h_1, k_1\right) * \left(h_2, k_2\right) &= \left(h_1 \phi_{k_1} \left(h_2\right), k_1 k_2 \right) =  \left(h_1 \left(k_1 \cdot h_2\right), k_1 k_2\right)
\end{align}.
Then the group $G$ from is called the semidirect product of $H$ by $K$, and is denoted by $H \rtimes_\phi K$.
\end{mydef}

The wreath product is defined as follows \cite{dummit2004abstract}.

\begin{mydef} [Wreath product]
Let $K$ and $L$ be groups, let $n$ be a positive integer, let $\phi : K \to S_n$  be a homomorphism and let $H$ be the direct product of $n$ copies of $L$. Let $\psi$ be an injective homomorphism from $S_n$ into $Auto \left(H\right)$  constructed by letting the elements of $S_n$ permute the $n$ factors of $H$. The composition $\psi \circ \phi$ is a homomorphism from $G$ into $Aut \left(H\right)$. The wreath product of $L$ by $K$ is the semidirect product $H \rtimes K$ with respect to this homomorphism and is denoted by $L \wr K$. 
\end{mydef}

The dimension of an irreducible representation of the symmetric group is defined as follows.

\begin{mydef} [Dimension of an irreducible representation]
The dimension $dim_{\rho_\lambda}$ of an irreducible representation $\rho$ for a partition $\lambda = \left(\lambda_1 + \ldots + \lambda_i + \ldots + \lambda_k\right)$ of a symmetric group $S_n$ is given as follows \cite{fulton1991representation}.

\begin{align}
dim_{\rho_\lambda} &= \frac{n!}{l_1 \cdot \ldots \cdot l_k!} \Pi_{i < j} \left(l_i - l_j\right),
\end{align}
with $l_i = \lambda_i + k - i$.
\end{mydef}

It is suitable to mention the following theorem on the multiplicity of an irreducible representation in the regular representation \cite{fulton1991representation}.

\begin{mytheo}
\label{theo:irrep-multiplicity-in-regular}
Every irreducible representation $\rho$ occurs $\text{dim}\left(\rho\right)$ times in the regular representation.
\end{mytheo}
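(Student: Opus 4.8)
The plan is to prove this through character theory, using the inner product of characters from Definition~\ref{def:inner-prod-char}. The central fact I would invoke is that the multiplicity with which an irreducible representation $\rho$ occurs in a representation with character $\chi$ equals $\langle \chi, \chi_\rho \rangle$. Accordingly, I would first compute the character $\chi_R$ of the regular representation $R$, and then evaluate $\langle \chi_R, \chi_\rho \rangle$ for an arbitrary irreducible $\rho$.

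First I would compute $\chi_R$. By the definition of the regular representation, $G$ acts on the vector space with basis $\left\{e_x : x \in G\right\}$ via $g \cdot e_x = e_{gx}$, so each group element $g$ is represented by a permutation matrix. Its trace counts the fixed basis vectors, that is, the number of $x \in G$ with $gx = x$. This number is $|G|$ when $g = e$ and $0$ otherwise, so that
\begin{align}
\chi_R(g) &= \begin{cases} |G| & g = e, \\ 0 & g \neq e. \end{cases}
\end{align}

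Next I would substitute this into the inner product. Because $\chi_R$ vanishes away from the identity, only the $g = e$ term of the sum in Definition~\ref{def:inner-prod-char} survives, giving
\begin{align}
\langle \chi_R, \chi_\rho \rangle &= \frac{1}{|G|} \sum_{g \in G} \chi_R(g)\, \chi_\rho^\dagger(g) = \frac{1}{|G|}\, |G|\, \chi_\rho^\dagger(e) = \chi_\rho(e).
\end{align}
Since $\chi_\rho(e) = \mathrm{Tr}(I) = \dim(\rho)$ is a real positive integer, this equals $\dim(\rho)$, which is exactly the claimed multiplicity.

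The only nonelementary ingredient, and the step I would flag as the substantive one, is the identity $m_\rho = \langle \chi, \chi_\rho \rangle$ relating the multiplicity of $\rho$ to the inner product of characters. This rests on the orthonormality relations for irreducible characters, namely $\langle \chi_\sigma, \chi_\rho \rangle = \delta_{\sigma\rho}$, applied to the decomposition $\chi = \sum_\sigma m_\sigma \chi_\sigma$ into irreducibles. With orthonormality established the remaining trace computation is immediate, so the proof reduces to invoking that standard relation together with the permutation-matrix calculation of $\chi_R$ above.
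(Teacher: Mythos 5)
Your proposal is correct and follows essentially the same route as the paper: compute the character of the regular representation as $|G|$ at the identity and $0$ elsewhere via the fixed-point count of permutation matrices, then evaluate the inner product with $\chi_\rho$ so that only the identity term survives, yielding $\chi_\rho(e) = \dim(\rho)$. Your version is slightly more careful in that it explicitly flags the reliance on the orthonormality relations to identify $\langle \chi_R, \chi_\rho\rangle$ with the multiplicity, a step the paper leaves implicit.
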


\begin{proof}[Proof of Theorem ~\ref{theo:irrep-multiplicity-in-regular}]
Let $\chi$ be the character of the regular representation. Then
\begin{align*}
\chi \left(g\right) &=\begin{cases} 
    n       & \quad \text{if } g = 1, \text{ and}\\
   0  & \quad \text{ otherwise.}\\
  \end{cases}
\end{align*}
Because, each group elements acts by a permutation matrix, and the trace of a permutation matrix is simply the number of fixed points of the permutation. Thus,
\begin{align*}
\langle \chi_\rho, \chi \rangle &= \frac{1}{n} \bar{\chi_\rho \left(1\right) \chi \left(1\right)}
\\
&= \frac{1}{n} \text{dim} \left(\rho\right) n
\\
&= \text{dim} \left(\rho\right)
\end{align*}
\qedhere
\end{proof}

Another two important concepts in representation theory are {\it restriction} and {\it induction} \cite{sagan2013symmetric}.

\begin{mydef}[Restriction]
Let $H$ be a subgroup of $G$ and $X$ be a matrix representation of $G$. The restriction of $X$ to $H$, $X \downarrow^G_H$, is given by
\begin{align*}
X\downarrow^G_H \left(h\right) = X \left(h\right)
\end{align*}
for all $h \in H$.
\end{mydef}

\begin{mydef}[Induction]
\label{def:induction}
Let $H \le G$ and $t_1, \ldots, t_l$ be a fixed transversal for the left cosets of $H$, i.e., $G = t_1 H \sqcup \ldots \sqcup t_l H$. If $Y$ is a representation of $H$, then the corresponding induced representation $Y\uparrow^G_H$ assigns to each $g \in G$ the block matrix
\begin{align*}
Y\uparrow^G_H \left(g\right) &= Y \left(t^{-1}_i g t_j\right)
\nonumber\\
&= \begin{pmatrix}
  Y \left(t^{-1}_1 g t_1\right) & Y \left(t^{-1}_1 g t_2\right) & \cdots & Y \left(t^{-1}_1 g t_l\right) \\
  Y \left(t^{-1}_2 g t_1\right) & Y \left(t^{-1}_2 g t_2\right) & \cdots & Y \left(t^{-1}_2 g t_l\right) \\
  \vdots  & \vdots  & \ddots & \vdots  \\
  Y \left(t^{-1}_l g t_1\right) & Y \left(t^{-1}_l g t_2\right) & \cdots & Y \left(t^{-1}_l g t_l\right)
 \end{pmatrix}
\end{align*}
where $Y \left(g\right)$ is the zero matrix if $g \notin H$.
\end{mydef}

It is natural to define the characters for the restricted and induced representations \cite{serre2012linear}.

\begin{mydef}[Character of restricted representation]
Let $X$ be a matrix representation of a group $G$, and let $H \le G$ be a subgroup. Then, the character of the restricted representation $\chi \downarrow^G_H \left(h\right)$ is the character of the original representation $\chi \left(h\right)$ for all $h \in H$.
\end{mydef}

The definition of the character of induced representation is reproduced from \cite{JohnArmstrongCharInducRepre}.

\begin{mydef}[Character of induced representation]
\label{def:char-ind-rep}
Let $Y$ be a matrix representation of a group $H$ such that $H \le G$. A transversal of $H$ in $G$ is now picked. Using the previously mentioned formula for the induced representation, it is found that,
\begin{align*}
\chi \uparrow^G_H \left(g\right) &= Tr \left( Y \left(t^{-1}_i g t_j\right)\right)
\nonumber\\
&= Tr \begin{pmatrix}
  Y \left(t^{-1}_1 g t_1\right) & Y \left(t^{-1}_1 g t_2\right) & \cdots & Y \left(t^{-1}_1 g t_l\right) \\
  Y \left(t^{-1}_2 g t_1\right) & Y \left(t^{-1}_2 g t_2\right) & \cdots & Y \left(t^{-1}_2 g t_l\right) \\
  \vdots  & \vdots  & \ddots & \vdots  \\
  Y \left(t^{-1}_l g t_1\right) & Y \left(t^{-1}_l g t_2\right) & \cdots & Y \left(t^{-1}_l g t_l\right)
 \end{pmatrix}
 \nonumber\\
 &= \sum^n_{i=1} Tr \left(Y \left(t^{-1}_i g t_i\right)\right)
 \nonumber\\
 &= \sum^n_{i=1} \chi  \left(t^{-1}_i g t_i\right)
\end{align*}
where $\chi \left(g\right)$ is the zero matrix if $g \notin H$.
\end{mydef}

Since, $\chi$ is a class function on $H$, conjugation by any element $h \in H$ leaves it the same. So, $\chi \left(h^{-1} g h\right) = \chi \left(g\right)$ for all $g \in G$ and $h \in H$.

The same computation is performed for each element of $H$. Then, all the results are added together and divided by the number of elements of $H$. In other words, the above function is written out in $|H|$ different ways, added  all together, and divided by $|H|$ to get exactly what the section started with started with:

\begin{align}
\chi \uparrow^G_H \left(g\right) &= \frac{1}{|H|} \sum_{h \in H} \sum^n_{i = 1} \chi \left(h^{-1} t^{-1}_i g t_i h\right)
\nonumber\\
&= \frac{1}{|H|} \sum_{h \in H} \sum^n_{i=1} \chi \left(\left(t_i h\right)^{-1} g \left(t_i h\right)\right)
\end{align}
But now as $t_i$ varies over the transversal, and as $h$ varies over $H$, their product $t_i h$ varies exactly once over $G$. That is, every $x \in G$ can be written in exactly one way in the form $t_ h$ for some transversal element $t_i$ and subgroup element $h$. Thus, the following relation can be stated.

\begin{align}
\chi \uparrow^G_H \left(g\right) &= \frac{1}{|H|} \sum_{x \in G} \chi \left(x^{-1} g x\right)
\end{align}

It would be appropriate if the following theorem on the {\it Frobenius reciprocity} is also mentioned in this section \cite{sagan2013symmetric}.

\begin{mytheo}[Frobenius reciprocity]
Let $H \le G$ and suppose that $\psi$ and $\chi$ are characters of     $H$ and $G$, respectively. Then
\begin{align}
\langle \psi \uparrow^G_H, \chi \rangle_G &= \langle \psi, \chi \downarrow^G_H \rangle_H
\end{align}
where the left inner product is calculated in $G$ and the right one in $H$.
\end{mytheo}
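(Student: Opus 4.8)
The plan is to prove the identity by direct computation, starting from the left-hand side and transforming it until the right-hand side emerges. The essential ingredient is the closed formula for the induced character already obtained in Definition~\ref{def:char-ind-rep}, namely $\psi\uparrow^G_H(g) = \frac{1}{|H|}\sum_{x \in G}\psi(x^{-1} g x)$, together with the convention that $\psi$ is extended by zero outside $H$. I would substitute this formula into the definition of the inner product over $G$ (Definition~\ref{def:inner-prod-char}) to turn the single sum defining $\langle \psi\uparrow^G_H,\chi\rangle_G$ into a double sum over $g,x \in G$.

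Concretely, the first step produces
\begin{align*}
\langle \psi\uparrow^G_H,\chi\rangle_G = \frac{1}{|G|\,|H|}\sum_{g \in G}\sum_{x \in G}\psi(x^{-1} g x)\,\chi^\dagger(g).
\end{align*}
Next, for each fixed $x$ I would apply the change of variables that replaces $g$ by $x g x^{-1}$; since conjugation by $x$ permutes $G$, the value of the sum over $g$ is unaffected, the argument of $\psi$ collapses to $g$, and the argument of $\chi^\dagger$ becomes $x g x^{-1}$. The decisive fact now is that $\chi$ is a class function on $G$, so that $\chi^\dagger(x g x^{-1}) = \chi^\dagger(g)$. After this simplification the summand $\psi(g)\,\chi^\dagger(g)$ no longer depends on $x$.

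With the summand independent of $x$, the sum over the $|G|$ choices of $x$ contributes a factor of $|G|$, which cancels the leading $1/|G|$ and leaves $\frac{1}{|H|}\sum_{g \in G}\psi(g)\,\chi^\dagger(g)$. Because $\psi$ vanishes outside $H$ by the extension-by-zero convention, only the terms with $g \in H$ survive, and on $H$ one has $\chi^\dagger(g) = (\chi\downarrow^G_H)^\dagger(g)$ by the definition of restriction. The surviving expression is precisely $\langle \psi,\chi\downarrow^G_H\rangle_H$, which is the desired right-hand side.

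The computation is short and each step is essentially forced, so I do not anticipate a deep obstacle; the only place that genuinely demands care is the change of variables and the bookkeeping of which character the conjugation acts upon. The real conceptual content is the observation that the class-function invariance of $\chi$ is exactly what renders the inner summand independent of $x$, and I would state this explicitly rather than let it disappear into index manipulation, since it is the hinge on which the whole identity turns.
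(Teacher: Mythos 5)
Your argument is correct: it is the standard proof of Frobenius reciprocity, and every step checks out — the substitution of the averaged induced-character formula, the change of variables $g \mapsto x g x^{-1}$, the use of class-function invariance of $\chi$ to decouple the summand from $x$, and the collapse to a sum over $H$ via the extension-by-zero convention. Note that the paper itself offers no proof of this theorem (it is quoted from the literature), but your derivation relies on exactly the formula $\chi\uparrow^G_H(g) = \frac{1}{|H|}\sum_{x\in G}\chi(x^{-1} g x)$ that the paper establishes immediately beforehand in Definition~\ref{def:char-ind-rep}, so your proof is the natural completion of the machinery already developed here.
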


A special case of {\it Frobenius reciprocity} is relevant to the discussion of this paper where the representation of $H$ is the trivial representation $\text{\bf 1}_H$ \cite{hallgren2000normal}. The case is described as follows.

\begin{mylem}[Special case of Frobenius reciprocity]
\label{lem:frob-recip-special}
Let $H \le G$ and suppose that $\chi_\rho$ is the character of the irreducible representation $\rho$ of $G$. Then
\begin{align}
\langle {\chi \uparrow^G_H}_{\text{\bf 1}_H}, \chi_\rho \rangle_G &= \langle \chi_{\text{\bf 1}_H}, \chi_\rho \downarrow^G_H \rangle_H
\end{align}
where the left inner product is calculated in $G$ and the right one in $H$.
\end{mylem}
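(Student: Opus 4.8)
The plan is to obtain this lemma as an immediate specialization of the general Frobenius reciprocity theorem stated just above, rather than proving anything new. In that theorem the characters $\psi$ and $\chi$ range over characters of $H$ and $G$ respectively. I would simply instantiate $\psi = \chi_{\text{\bf 1}_H}$, the character of the trivial representation $\text{\bf 1}_H$ of $H$, and $\chi = \chi_\rho$, the character of the irreducible representation $\rho$ of $G$. Since $\text{\bf 1}_H$ is a bona fide representation of $H$, its character is a legitimate input, so the hypotheses of Frobenius reciprocity are met verbatim.

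With this substitution the left-hand inner product $\langle \psi \uparrow^G_H, \chi \rangle_G$ becomes $\langle {\chi \uparrow^G_H}_{\text{\bf 1}_H}, \chi_\rho \rangle_G$ and the right-hand inner product $\langle \psi, \chi \downarrow^G_H \rangle_H$ becomes $\langle \chi_{\text{\bf 1}_H}, \chi_\rho \downarrow^G_H \rangle_H$, which is exactly the claimed identity. The only point that needs a word of care is notational: I would confirm that the symbol ${\chi \uparrow^G_H}_{\text{\bf 1}_H}$ denotes the character of the representation induced \emph{from} the trivial representation of $H$, i.e. $\chi_{\text{\bf 1}_H}\uparrow^G_H$, so that it coincides with $\psi \uparrow^G_H$ under the substitution.

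For completeness, one could instead verify the identity directly from the explicit induced-character formula derived above in Definition \ref{def:char-ind-rep}. Writing $\chi_{\text{\bf 1}_H}\uparrow^G_H(g) = \frac{1}{|H|}\sum_{x \in G}\chi_{\text{\bf 1}_H}(x^{-1}gx)$ and using $\chi_{\text{\bf 1}_H}(h)=1$ for $h \in H$ (and $0$ otherwise, per the convention there), one expands both inner products and reindexes the resulting double sum over $G \times G$ via $y = x^{-1}gx$, exploiting that $\chi_\rho$ is a class function; both sides then collapse to the common value $\frac{1}{|H|}\sum_{h\in H}\overline{\chi_\rho(h)}$. I do not expect any genuine obstacle here, since the content is entirely subsumed by the general theorem and the specialization is mechanical. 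The substance of the lemma lies not in its proof but in its later use, where it permits replacing the global inner product over $G$ by a local computation over the hidden subgroup $H$.
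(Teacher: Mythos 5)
Your proposal is correct and matches the paper's treatment exactly: the paper offers no separate proof of this lemma, presenting it precisely as the specialization of the general Frobenius reciprocity theorem to $\psi = \chi_{\text{\bf 1}_H}$ and $\chi = \chi_\rho$, which is what you do. Your supplementary direct verification via the induced-character formula is also sound (both sides reduce to $\frac{1}{|H|}\sum_{h\in H}\overline{\chi_\rho(h)}$), and your notational caution about reading ${\chi \uparrow^G_H}_{\text{\bf 1}_H}$ as $\chi_{\text{\bf 1}_H}\uparrow^G_H$ is warranted given the paper's typesetting.
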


The Example $3.13$ from  \cite{fulton1991representation} can be reproduced here in relevance to the ongoing discussion.

\begin{myrem}
\label{rem:triv-rep-induc}
The permutation representation associated to the left action of $G$ on $G/H$ is induced from the trivial one-dimensional representation $W$ of $H$. Here, the representation of $G$, $V$ has basis $\left\{e_\sigma: \sigma \in G/H\right\}$, and $W = \mathbb{C}\cdot e_H$, with     $H$the trivial coset.
\end{myrem}

Following Remark ~\ref{rem:triv-rep-induc}, it can be said that $\text{\bf 1}_H\uparrow^G_H $ is the permutation representation of $G$. So, according to the Theorem ~\ref{theo:irrep-multiplicity-in-regular}, the multiplicity of $\rho$ in $\text{\bf 1}_H\uparrow^G_H $ is $d_\rho$.

\subsection{Quantum Fourier Sampling}
Quantum Fourier Sampling is a class of quantum algorithms which uses quantum Fourier transformation as a subroutine.
The definition of quantum Fourier transform of a map from a finite group to its representation is defined as follows where the notations are borrowed from \cite{hallgren2000normal}. Later in this section, the algorithm is also presented.

\begin{mydef}[Fourier transformation of a finite group]
\label{def:qft}
Let $f: G \to \mathbb{C}$. The Fourier transform of $f$ at the irreducible representation $\rho$ is the $d_\rho \times d_\rho$ matrix
\begin{align}
\hat{f} \left(\rho\right) &= \sqrt{\frac{d_\rho}{|G|}} \sum_{g\in G} f\left(g\right) \rho \left(g\right)
\end{align}
\end{mydef}

In quantum Fourier transform, the superposition $\sum_{g\in G} f_g |g\rangle$ is identified  with the function $f: G \to \mathbb{C}$ defined by $f \left(f \left(g\right)\right) = f_g$. Using this notation, $\sum_{g\in G} f\left(g\right) |g\rangle$ is mapped under the Fourier transform to $\sum_{\rho \in \hat{G}, 1 \le i, j\le d_\rho} \hat{f}\left(\rho\right)_{i, j}|\rho, i, j\rangle$. Here, $\hat{G}$ is the set of all irreducible representations of $G$ and $\hat{f}\left(\rho\right)_{i,j}$ is a complex number. The probability of measuring the register $|\rho \rangle$ is
\begin{align}
\sum_{1\le i, j\le d_\rho} |\hat{f}\left(\rho\right)_{i,j}|^2 &= \|\hat{f}\left(\rho\right)\|^2
\end{align}
where $\|A\|$ is the natural norm (also known as Frobenius norm) given by $\|A\|^2 = Tr \left(A^\dagger A\right)$.

The Frobenius norm can be calculated from the characters of the group associated which is demonstrated in the following theorem reproduced from \cite{hallgren2000normal}.

\begin{mytheo}
\label{theo:frob-char}
If, $f$ is an indicator function of a left closet of $H$ in $G$, i.e. for some $c \in G$,
\begin{align}
f \left(g\right) &= \begin{cases}
    \frac{1}{\sqrt{|H|}}       & \quad \text{ if } g \in c H, \text{ and }\\
    0  & \quad 0 \text{ otherwise}\\
  \end{cases}
\end{align}, then,

\begin{align}
\|\hat{f} \left(\rho\right)\|^2 &= \frac{|H|}{|G|} d_\rho \langle \chi_\rho, \chi_{\text{\bf 1}_H} \rangle_H
\end{align}
\end{mytheo}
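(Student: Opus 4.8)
The plan is to substitute the explicit coset indicator $f$ directly into the definition of the Fourier transform (Definition \ref{def:qft}) and then reduce the resulting Frobenius norm to a sum of characters over $H$, which is precisely the inner product $\langle \chi_\rho, \chi_{\text{\bf 1}_H}\rangle_H$ once the trivial character is recognized. Throughout I would take $\rho$ to be a \emph{unitary} irreducible representation; this is harmless, since every representation of a finite group is equivalent to a unitary one and the measurement probabilities in quantum Fourier sampling are computed with respect to unitary $\rho$. This unitarity is what will let me discard the coset-representative factor cleanly.

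First I would plug $f$ into $\hat{f}(\rho)$. Because $f$ is supported on the single coset $cH$ and is constant there, the sum over $G$ collapses to a sum over $H$, and the homomorphism property $\rho(ch)=\rho(c)\rho(h)$ lets me factor out $\rho(c)$:
\begin{align*}
\hat{f}(\rho) &= \sqrt{\frac{d_\rho}{|G|}}\,\frac{1}{\sqrt{|H|}}\sum_{h\in H}\rho(ch) = \sqrt{\frac{d_\rho}{|G|\,|H|}}\,\rho(c)\,P, \qquad P:=\sum_{h\in H}\rho(h).
\end{align*}
Next I would compute $\|\hat{f}(\rho)\|^2=\text{Tr}\!\left(\hat{f}(\rho)^\dagger\hat{f}(\rho)\right)$. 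Unitarity gives $\rho(c)^\dagger\rho(c)=I$, so the factor $\rho(c)$ drops out (this is exactly why the answer is independent of the chosen representative $c$), leaving $\|\hat{f}(\rho)\|^2 = \frac{d_\rho}{|G|\,|H|}\,\text{Tr}(P^\dagger P)$. Expanding $P^\dagger P=\sum_{h_1,h_2\in H}\rho(h_1^{-1}h_2)$ and taking the trace yields $\text{Tr}(P^\dagger P)=\sum_{h_1,h_2\in H}\chi_\rho(h_1^{-1}h_2)$.

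The one genuinely combinatorial step comes here: for each fixed $h_1$, the element $h_1^{-1}h_2$ runs exactly once over $H$ as $h_2$ ranges over $H$, so the double sum factors as $|H|\sum_{h\in H}\chi_\rho(h)$, giving $\text{Tr}(P^\dagger P)=|H|\sum_{h\in H}\chi_\rho(h)$. Finally I would recognize the remaining sum as a character inner product over $H$. Since $\chi_{\text{\bf 1}_H}(h)=1$ for every $h\in H$, Definition \ref{def:inner-prod-char} gives $\langle\chi_\rho,\chi_{\text{\bf 1}_H}\rangle_H=\tfrac{1}{|H|}\sum_{h\in H}\chi_\rho(h)$, hence $\sum_{h\in H}\chi_\rho(h)=|H|\,\langle\chi_\rho,\chi_{\text{\bf 1}_H}\rangle_H$. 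Substituting back,
\begin{align*}
\|\hat{f}(\rho)\|^2 &= \frac{d_\rho}{|G|\,|H|}\cdot|H|\cdot|H|\,\langle\chi_\rho,\chi_{\text{\bf 1}_H}\rangle_H = \frac{|H|}{|G|}\,d_\rho\,\langle\chi_\rho,\chi_{\text{\bf 1}_H}\rangle_H,
\end{align*}
which is the claimed identity.

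I expect the main obstacle to be purely a matter of careful bookkeeping rather than any deep idea: getting the unitarity normalization right in the step that eliminates $\rho(c)$, and then matching the reindexing $h_2\mapsto h_1^{-1}h_2$ against the complex-conjugation convention built into Definition \ref{def:inner-prod-char}, so that the trivial character $\chi_{\text{\bf 1}_H}$ is inserted with the correct orientation. Once those conventions are pinned down, everything else follows from the coset structure of $f$ and the homomorphism property of $\rho$.
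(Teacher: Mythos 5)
Your proposal is correct and follows essentially the same route as the paper's own proof: factor out $\rho(c)$ by unitarity, reduce $\mathrm{Tr}(P^\dagger P)$ to $|H|\sum_{h\in H}\chi_\rho(h)$ via the reindexing $h_2\mapsto h_1^{-1}h_2$, and identify the remaining sum as $|H|\,\langle\chi_\rho,\chi_{\mathbf{1}_H}\rangle_H$. If anything your bookkeeping is tidier, since you carry the $1/\sqrt{|H|}$ normalization explicitly from the definition of $f$ all the way through, whereas the paper drops it in one display and reinstates a compensating $1/|H|$ later without comment.
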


\begin{proof}[Proof of Theorem ~\ref{theo:frob-char}]
From Definition ~\ref{def:qft}, it is known that,
\begin{align}
\|\hat{f} \left(\rho\right)\|^2 &= \sum_{1\le i, j\le d_\rho} |\hat{f}\left(\rho\right)_{i,j}|^2
\end{align}
Only  $\rho$ is to be measured. 

Following relation is assumed in the theorem.
\begin{align}
\|\hat{f} \left(\rho\right)\|^2 &= \| \rho \left(c\right) \sum_{h\in H} \rho \left(h\right) \|^2
\end{align}
$\rho \left(c\right)$ is a unitary matrix. So, as a multiplier it does not change the norm \cite{meyer2000matrix}.
\begin{align}
\|\hat{f} \left(\rho\right)\|^2 &= \| \sum_{h\in H} \rho \left(h\right) \|^2
\end{align}

So, the probability of measuring $\rho$ is determined by $\sum_{h\in H} \rho \left(h\right)$. If correctly normalized, $\frac{1}{|H|}\sum_{h\in H}\rho(h)$ is a projection.

\begin{align}
\left(\frac{1}{|H|}\sum_{h\in H}\rho(h)\right)^2&=\frac{1}{|H|^2}\sum_{h_1,h_2\in H} \rho(h_1 h_2)
\nonumber\\
&=\frac{1}{|H|}\sum_{h\in H}\rho(h)
\end{align}
because $h_1h_2=h$ has $|H|$ solutions $(h_1,h_2)\in H\times H$.

 With the right choice of basis,   $\hat{f} \left(\rho\right)$ will be diagonal and consist of ones and zeros. The probability of measuring $\rho$ will then be the sum of ones in the diagonal. As $\rho$ is an irreducible representation of $G$,  the sum of the matrices $\rho \left(h\right)$ for all $h \in H$ needs to be taken into account. Based of the assumption of the current theorem, one may only consider to evaluate $\rho$ on $H$. According to the assumption, the probability of measuring $\rho$ when $g \notin cH$ is zero. So, one may consider consider $\rho \downarrow^G_H$ instead of $G$.

Then, the Fourier transform of $f$ at $\rho$ is comprised of blocks, each corresponding to a representation in the decomposition of $\rho\downarrow^G_H$. Such as,
\begin{align}
\sum_{h \in H} \rho \left(h\right) &= U  \begin{bmatrix}
\sum_{h \in H} \sigma_1 \left(h\right) & 0 & \cdots & 0 \\
  0 & \sum_{h \in H} \sigma_2 \left(h\right) & \cdots & 0 \\
  \vdots  & \vdots  & \ddots & \vdots  \\
 0 & 0 & \cdots & \sum_{h \in H} \sigma_t \left(h\right)
 \end{bmatrix} U^\dagger
\end{align}
Here, $U$ is an arbitrary unitary transformation, $\sigma_i$ is an irreducible representation of $H$ with possible repetition. Now, as a special case of the orthogonality relation among group characters,  $\sum_{h \in H} \rho \left(h\right)$ is nonzero only when the irreducible representation is trivial, in which case, it is $|H|$.

So, the probability of measuring $\rho$ is:
\begin{align}
\norm{\hat{f} \left(\rho\right)}^2 &= \norm{ \sqrt{\frac{d_\rho}{|G|}} \sum_{h\in H} \rho \left(h\right) }^2
\nonumber\\
&= \frac{d_\rho}{|G|}  \norm{ \sum_{h\in H} \rho \left(h\right) }^2
\nonumber\\
&= \frac{d_\rho}{|G|} \mathrm{tr}\left(\sum_{h_1\in H}\rho(h_1)\right)\left(\sum_{h_2\in H}\rho(h_2)\right)^\dagger
\nonumber\\
&= \frac{d_\rho}{|G|} \mathrm{tr}\sum_{h_1,h_2\in H}\rho(h_1h_2^{-1})
\nonumber\\
&= \frac{d_\rho}{|G|} \frac{1}{|H|} |H|^2 \langle \chi_\rho, \chi_{\text{\bf 1}_H}\rangle_H
\nonumber\\
&= \frac{|H|}{|G|} d_\rho \langle \chi_\rho, \chi_{\text{\bf 1}_H} \rangle_H
\end{align}
It should be mentioned that, by definition, $\rho$ appears  $\langle \chi_\rho, \chi_{1_H} \rangle_H$ times in the decomposition of $\text{\bf 1}_H$.
\qedhere
\end{proof}

Interested readers are encouraged to refer to \cite{diaconis1990efficient} for a review on the classical complexity of Fourier transformation of the symmetric groups. The goal of quantum Fourier sampling algorithm is to sample the labels and elements of the irreducible representations available after quantum Fourier transformation. Sampling only the labels of representations is called weak sampling. On the other hand, sampling also the indices of the elements of the matrix is called strong Fourier sampling. The quantum Fourier sampling algorithm is reproduced from \cite{hallgren2000normal}.

\begin{algorithm}[H]
\caption{{\bf QUANTUM-FOURIER-SAMPLING}}
 \label{algo:weak}
\begin{algorithmic}[1]
\Procedure {QUANTUM-FOURIER-SAMPLING}{$f : G \to S$}
\State Compute $\sum_{g \in G} |g, f \left(g\right) \rangle$ and measure the second register $f \left(g\right)$. The resulting super-position is $\sum_{h \in H} |c h \rangle \otimes | f \left(c h\right)\rangle$ for some coset $c H$ of $H$. Furthermore, $c$ is uniformly distributed over $G$.\;

\State  Compute the Fourier transform of the coset state which is $\sum_{\rho \in \hat{G}} \sqrt{\frac{d_\rho}{|G|}} \sqrt{\frac{1}{|H|}} \left(\sum_{h \in H} \rho \left(c h\right)\right)_{i, j} | \rho, i, j\rangle$, where $\hat{G}$ denotes the set of irreducible representations of $G$.\;

\State  Measure the first register and observe a representation $\rho$ (weak) or $\rho, i, \text{ and } j$ (strong). \;
\EndProcedure
\end{algorithmic}
\end{algorithm}

\section{Quantum Fourier sampling for graph isomorphism}
\label{sec:qft-gi}
The goal of this section is to give the readers a detailed exposition of the standard approach of hidden subgroup algorithms for graph isomorphism problems. In this section, the quantum Fourier sampling algorithm for graph isomorphism is described. Although it is reproduced verbatim from \cite{hallgren2000normal}, the notation is changed to match it to this paper's original discussion of quantum Fourier sampling for graph automorphism. Most of the algorithms for the non-Abelian hidden subgroup problem use a black box for $\varphi$ in the same way as in the Abelian hidden subgroup problem \cite{lomonacopers2002}. This has come to be known as the {\it standard method}. The standard method begins by preparing a uniform superposition over group elements \cite{childs2016lecture}:

\begin{align}
|S_n \rangle := \frac{1}{\sqrt{|S_n|}} \sum_{g\in S_n} |g\rangle
\end{align}

The value of $\varphi \left(g\right)$ is then computed in an ancilla register which creates the following state.

\begin{align}
 \frac{1}{\sqrt{|S_n|}} \sum_{g\in S_n} |g, \varphi \left(g\right) \rangle
\end{align}

Then  the second register is discarded by just being traced it out. If the outcome of the second register is $s$  then the state is projected onto the uniform superposition of those $g \in S_n$ such that $\varphi \left(g\right) = s$. By definition of $\varphi$, it is some left coset of the hidden subgroup $H$. Since every coset contains the same number of elements, each left coset occurs with equal probability. Thus, the standard method produces the following coset state.

\begin{align}
|g H\rangle := \frac{1}{\sqrt{|H|}} \sum_{h \in H} |g h \rangle
\end{align}

or equivalently as the following mixed {\it hidden subgroup state}.

\begin{align}
\rho_H := \frac{1}{|S_n|} \sum_{g \in S_n} |g h \rangle \langle g h |
\end{align}

It has been previously mentioned that $\varphi$ maps the group elements of $S_n$ to $\text{Mat}\left(\mathbb{C}, N \right)$. Here,  more information is presented about the space $\text{Mat}\left(\mathbb{C}, N \right)$. Let the complete set of  irreducible representations of $S_n$ (which are unique up to isomorphism) be $\hat{S_n}$. The Fourier transform is a unitary transformation from the group algebra, $\mathbb{C} S_n$, to a complex vector space whose basis vectors correspond to matrix elements of the irreducible representations of $S_n$, $\oplus_{\rho \in \hat{S_n}} \left(\mathbb{C}^{d_\rho} \otimes \mathbb{C}^{d_\rho}\right)$. Here, $d_\rho$ is the dimension of the irreducible representation $\rho$. 

$|g \rangle$ is the basis vector chosen for the group element $g \in S_n$. There will be $n !$ such basis vectors of dimension $n! \times 1$. For a given group element $g$, there are a particular number of matrices  one for each irreducible representation $\rho$. $|g h\rangle$ is expressed as $|\rho, j, k\rangle$ which is the basis vector labeled by the $\left( j, k \right)$-th element of the irreducible representation $\rho$ of $g$.

As it is mentioned earlier, when only $\rho$ is measured  from $|\rho, j, k\rangle$, it is called {\it weak Fourier sampling}. In {\it strong Fourier sampling}, $j$ and $k$ are also measured.

\subsection{Weak Fourier sampling for $\text{{\bf GI}}_{\text{HSP}}$}
\label{sec:wfs-gi}
This section summarizes what already is known about the weak Fourier sampling when applied to the graph isomorphism problem. The weak Fourier sampling for $\text{{\bf GI}}_{\text{HSP}}$ attempts to measure the labels of irreducible representations of the symmetric group $S_{2 n}$ when the input graphs $\Gamma_1$ and $\Gamma_2$ are of $n$ vertices. It is assumed that $Aut \left(\Gamma_1\right) = Aut \left(\Gamma_2\right) = \left\{e\right\}$. If $\Gamma = \Gamma_1 \sqcup \Gamma_2$, one of the following two claims is true \cite{hallgren2000normal}. 

\begin{itemize}
\item If $\Gamma_1 \not\approx \Gamma_2$, then $\text{Aut}\left(\Gamma \right) = \left\{e\right\}$.
\item If $\Gamma_1 \approx \Gamma_2$, then $\text{Aut}\left(\Gamma \right) = \left\{e, \sigma\right\} = \mathbb{Z}_2$, where $\sigma \in S_{2n}$ is a permutation with $n$ disjoint $2$-cycles.
\end{itemize}

It should be mentioned that in \cite{hallgren2000normal}, the authors derived the success probability of measuring the label of the irreducible representations for $S_n$. In this paper, the same probability will be derived for $S_{2 n}$ to keep consistency with the definition of $\text{{\bf GI}}_{\text{HSP}}$.

The weak Fourier sampling algorithm for finding $\text{Aut}(\Gamma)$ in $S_{2 n}$ is described below.

\begin{algorithm}[H]
\caption{{\bf WEAK-QUANTUM-FOURIER-SAMPLING-}$S_{2 n}$}
 \label{algo:weak-gi}
\begin{algorithmic}[1]
\Procedure {WEAK-QUANTUM-FOURIER-SAMPLING-$S_{2 n}$}{a graph $\Gamma$ such that either $\text{Aut} \left(\Gamma\right) = \left\{e\right\}$ or $\text{Aut} \left(\Gamma\right) = \left\{e, \sigma\right\}$}
\State    Compute $ \frac{1}{\sqrt{\left(2 n\right)!}} \sum_{g\in S_{2 n}} |g, \varphi \left(g\right) \rangle$\;

\State  Compute $\sum_{\rho \in \hat{S_{2 n}}} \sqrt{\frac{d_\rho}{\left(2 n\right)!}} \sqrt{\frac{1}{|\text{Aut}\left(\Gamma \right)|}} \left(\sum_{h \in \text{Aut}\left(\Gamma \right)} \rho \left(c h\right)\right)_{i, j} |\rho, i, j \rangle $\;

\State  Measure $\rho$ as in tracing it out \;
\EndProcedure
\end{algorithmic}
\end{algorithm}

Let $p_\rho$ be the probability of sampling $\rho$ in Algorithm ~\ref{algo:weak-gi} when $\Gamma_1 \not\approx \Gamma_2$, and $q_\rho$ when $\Gamma_1 \approx \Gamma_2$. So,  the induced representation of $\text{Aut} \left(\Gamma\right)$ to $S_{2 n}$, $\text{Ind}^{S_{2 n}}_{\text{Aut} \left(\Gamma\right)} 1$, is the regular representation. So, $\langle \chi_\rho | \chi_{\text{Ind}^{S_{2 n}}_{\text{Aut} \left(\Gamma\right)} 1} \rangle$, the multiplicity of $\rho$ in the regular representation, is $d_\rho$. Hence, $p_\rho = \frac{d^2_\rho}{\left(2 n\right)!}$.

When $\Gamma_1 \approx \Gamma_2$, $\text{Aut} \left(\Gamma\right) = \left\{e, \sigma\right\}$. In this case, the probability of measuring $\rho$, 

\begin{align}
q_\rho &= \frac{|\text{Aut}\left(\Gamma \right)|}{\left(2 n\right)!} d_\rho \langle \chi_1 | \chi_\rho \rangle_{\text{Aut}\left(\Gamma \right)}
\end{align}

$\text{Aut}\left(\Gamma \right)$ has only two elements, $e$ and $\sigma$, hence

\begin{align}
 \langle \chi_1 | \chi_\rho \rangle_{\text{Aut}\left(\Gamma \right)} &= \frac{1}{2} \left(\chi_\rho \left(e\right) + \chi_\rho \left(\sigma\right)\right)
 \nonumber\\
 &= \frac{1}{2} \left(d_\rho + \chi_\rho \left(\sigma\right)\right)
\end{align}

So, 

\begin{align}
q_\rho &= \frac{|\text{Aut}\left(\Gamma \right)|}{\left(2 n\right)!} d_\rho \frac{1}{2} \left(d_\rho + \chi_\rho \left(\sigma\right)\right)
 \nonumber\\
 &= \frac{2}{\left(2 n\right)!} d_\rho \frac{1}{2} \left(d_\rho + \chi_\rho \left(\sigma\right)\right)
 \nonumber\\
 &=\frac{d_\rho}{\left(2 n\right)!}  \left(d_\rho + \chi_\rho \left(\sigma\right)\right)
\end{align}

So, 

\begin{align}
| p_\rho - q_\rho| &=| \frac{d^2_\rho}{\left(2 n\right)!} - \frac{d_\rho}{\left(2 n\right)!}  \left(d_\rho + \chi_\rho \left(\sigma\right)\right)|
 \nonumber\\
 &=  \frac{d_\rho}{\left(2 n\right)!}   |\chi_\rho \left(\sigma\right)|
\end{align}

Now, the Murnaghan-Nakayama rule (Theorem ~\ref{theo:mnrule}) is used to approximate  $|\chi_\rho \left(\sigma\right)|$. 

The number of unordered decompositions for the diagram $\lambda$ with $2 n$ cells is $2^{4 n} = 16^n$. For each unordered decomposition, the number of ordered decomposition is at most $\left(2 \sqrt{2 n}\right)^{\frac{2 n}{2}} = \left(2 \sqrt{2 n}\right)^{n}$ . So, by the Murnaghan-Nakayama rule, $\chi_\rho \left(\sigma\right) \le 16^n \left(2 \sqrt{2 n}\right)^{n}$ .

So,

\begin{align}
|\chi_\rho \left(\sigma\right)| & \le 16^n \left(2 \sqrt{2 n}\right)^{n}
\nonumber\\
 &\le 2^{4 n} 2^{n} \sqrt{2}^n \left( \sqrt{ n}\right)^{n}
 \nonumber\\
 &\le 2^{O\left( n\right)}  n^{\frac{n}{2}}
\end{align}

Now $| p_\rho - q_\rho|$ is computed for all irreducible representations. So,

\begin{align}
| p - q|_1 &= \sum_\rho | p_\rho - q_\rho|
 \nonumber\\
 & \le \sum_\rho \frac{d_\rho}{\left(2 n\right)!} 2^{O\left( n\right)}  n^{\frac{n}{2}}
  \nonumber\\
  & \le \sum_\rho \frac{\sqrt{\left(2n\right)!}}{\left(2 n\right)!} 2^{O\left( n\right)}  n^{\frac{n}{2}}
\nonumber\\
&= \frac{2^{O\left( n\right)}  }{  \frac{n}{2}^{\frac{n}{2}}}
\nonumber\\
&\le \frac{2^{O\left( n\right)}  }{  \left(\frac{n}{2}\right)!}
\nonumber\\
&\lll 2^{-\Omega \left(n\right)}
\end{align}
So, the probability of successfully measuring the labels of the irreducible representations is exponentially low in the size of the graphs.

\section{Quantum Fourier sampling for cycle graph automorphism}
\label{sec:weak-cycle}
In this section, it is shown that quantum Fourier sampling fails to compute the automorphism group of a cycle graph. The result is original to this paper. The scheme of the proof is as follows. First,  the automorphism group of the graph is computed which is trivial for this case. Then  its irreducible representation is computed. Then the general expressions for group characters are derived. And, finally, using the characters, the probabilities of measuring the labels of irreducible representations are computed. A more granular representation of the previously mentioned steps is given below.

            \begin{tikzpicture}[node distance=5cm, scale=0.8, every node/.append style={transform shape}]

\node[text width=4cm] (tf) [startstop] {\scriptsize Determine the symmetric group in which\\ the automorphism group is hidden in};\\

\node[text width=4cm] (ln) [startstop , right of = tf] {\scriptsize Determine the presentation of the\\ automorphism group};

\node[text width=4cm] (mv) [startstop , below of = ln] {\scriptsize Determine the order of the\\ automorphism group};

\node[text width=4cm] (qp) [startstop , left of = mv] {\scriptsize Determine the irreducible representations\\ of the automorphism group};

\node[text width=4cm] (bq) [startstop , below of = qp] {\scriptsize Determine the characters\\ of the representations};

\node[text width=5cm] (bq1) [startstop , right of = bq] {\scriptsize Compute the inner product of the characters of the trivial representation of the automorphism group and the irreducible representations of the symmetric group restricted to the automorphism group};

\node[text width=5cm] (bq2) [startstop , below of = bq1] {\scriptsize To compute previous quantity, compute the inner product of the characters of the trivial representation of the automorphism group induced up to the symmetric group and the irreducible representation of the symmetric group};

\node[text width=4cm] (bq3) [startstop , left of = bq2] {\scriptsize Using the previous quantity compute the probability of sampling the labels of the irreducible representations of the automorphism group};

\draw [arrow] (tf) -- (ln);
\draw [arrow] (ln) -- (mv);
\draw [arrow] (mv) -- (qp);
\draw [arrow] (qp) -- (bq);
\draw [arrow] (bq) -- (bq1);
\draw [arrow] (bq1) -- (bq2);
\draw [arrow] (bq2) -- (bq3);

\end{tikzpicture}

\subsection{Automorphism group of cycle graph}
The automorphism group of an $n$-cycle graphs is the dihedral group $D_n$ of order $2 n$. This section intends to study the weak Fourier sampling of cycle graphs. To provide the background, this section discusses the irreducible representations of the dihedral group $D_n$. 

\begin{mydef}[Cycle Graph]
An $n$-cycle graph is a single cycle with $n$ vertices.
\end{mydef}

The automorphism group of an $n$-cycle graph is the dihedral group $D_n$ which is of order $2 n$.  If $D_n$ is even, the group can be generated as $\langle(2\quad n)(3 \quad n-1) \ldots (\frac{n}{2}-1 \quad \frac{n}{2}+1),\, (1 \ldots n)\rangle$. If $D_n$ is odd, the group can be generated as $\langle(2 \quad n) (3 \quad n-1) \ldots (\frac{n-1}{2} \quad \frac{n+1}{2}),\, (1 \ldots n)\rangle$. This is a manifestation of the presentation 
$D_n = \langle x, y \mid x^n = y^2 = (xy)^2 = 1, y x y=x^{-1}\rangle$. The correspondence consists of $x = (1 \ldots n)$ and  $y = (2 \quad n) (3 \quad n-1) \ldots (\frac{n}{2}-1\quad\frac{n}{2}+1)$ if $n$ is even, and  $y = (2 \quad n) (3 \hspace{0.5cm} n-1) \ldots (\frac{n-1}{2} \quad\frac{n+1}{2})$ if $n$ is odd. The orders of $x$ and $y$ are $n$ and $2$ respectively.

The order of the symmetric group $S_n$ is $n!$. The order of a dihedral group $D_n$ is $2 n$.  So, the index of $D_n$ in $S_n$ is $\frac{n!}{2n} \approx \frac{\sqrt{2 \pi n} \left(\frac{n}
{e}\right)^n}{2n}$.

It would be relevant if the following  important theorem proved in \cite{conrad2009dihedral} is mentioned here.

\begin{mytheo}
\label{theo:d-n-subgroups}
Every subgroup of $D_n$ is cyclic or dihedral. A complete listing of the subgroups is as follows:
\begin{itemize}
\item $\langle x^d \rangle$, where $d \mid n$, with index $2 d$,
\item $\langle x^d, x^i y \rangle$, where $d \mid n$ and $0 \le i \le d-1$, with index $d$.
\end{itemize}
Every subgroup of $D_n$ occurs exactly once in this listing.

In this theorem, subgroups of the first type are cyclic and subgroups of the second type are dihedral: $\langle x^d \rangle \cong \text{\bf Z}/\left(n/d\right)$ and $\langle x^d, x^i y  \rangle \cong D_{n/d}$.
\end{mytheo}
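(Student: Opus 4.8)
The plan is to classify the subgroups according to whether or not they contain a reflection, exploiting the fact that $\langle x \rangle \cong \mathbb{Z}/n$ is the index-two cyclic subgroup of rotations and that every element of $D_n$ is uniquely of the form $x^a$ or $x^a y$ with $0 \le a \le n-1$.

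First I would dispose of the rotation-only subgroups. If $H \le D_n$ satisfies $H \subseteq \langle x \rangle$, then $H$ is a subgroup of the cyclic group $\mathbb{Z}/n$, so by the classification of subgroups of cyclic groups $H = \langle x^d \rangle$ for a unique divisor $d \mid n$. Since $|\langle x^d \rangle| = n/d$, its index in $D_n$ is $2n/(n/d) = 2d$, and as a cyclic group of order $n/d$ it is $\cong \mathbb{Z}/(n/d)$. This yields the first family.

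Next I would treat a subgroup $H$ containing at least one reflection. Set $R = H \cap \langle x \rangle$, a subgroup of $\langle x \rangle$, so $R = \langle x^d \rangle$ for some $d \mid n$, and let $x^i y \in H$ be the reflection with least nonnegative exponent $i$. The key identity is that the product of two reflections is a rotation: using $y x^j = x^{-j} y$ one computes
\begin{align}
(x^i y)(x^j y) &= x^{i-j}.
\end{align}
Hence whenever $x^j y \in H$ we get $x^{i-j} \in R = \langle x^d \rangle$, forcing $j \equiv i \pmod d$ and therefore $x^j y = (x^d)^m (x^i y) \in \langle x^d, x^i y\rangle$; since also $R \subseteq \langle x^d, x^i y \rangle$, this gives $H = \langle x^d, x^i y \rangle$. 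Moreover the minimality of $i$ forces $0 \le i \le d-1$, because $x^{i-d} y = (x^d)^{-1}(x^i y)$ would otherwise be a reflection in $H$ with a smaller nonnegative exponent. Counting $n/d$ rotations and $n/d$ reflections gives $|H| = 2n/d$, hence index $d$; and writing $X = x^d$, $Y = x^i y$ one verifies $X^{n/d} = Y^2 = 1$ and $YXY^{-1} = X^{-1}$, so $H \cong D_{n/d}$.

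The remaining task is the \emph{exactly once} assertion, which is where the bookkeeping --- rather than any deep idea --- is concentrated. For the cyclic family, $d$ is recovered from $H$ itself as the unique divisor with $H = \langle x^d \rangle$. For the dihedral family, $d$ is recovered intrinsically as the index of $R = H \cap \langle x \rangle$ inside $\langle x \rangle$, and the residue $i \bmod d$ is recovered as the common residue modulo $d$ of all reflection exponents occurring in $H$; restricting to $0 \le i \le d-1$ then pins down a unique representative, while distinct residues produce distinct reflection sets and hence distinct subgroups. The main obstacle is simply confirming that the pair $(d,i)$ is a complete, non-redundant invariant; once the reflection-product identity above is available, the underlying algebra is entirely elementary.
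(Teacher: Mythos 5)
Your proof is correct: the split into rotation-only subgroups versus subgroups containing a reflection, the use of $R = H \cap \langle x \rangle$ together with the reflection-product identity $(x^i y)(x^j y) = x^{i-j}$, and the recovery of the invariant pair $(d, i \bmod d)$ for the uniqueness claim all go through. Note, however, that the paper itself offers no proof of this theorem --- it imports the statement verbatim from the cited reference \cite{conrad2009dihedral} --- so there is no in-paper argument to compare against; your write-up is essentially the standard argument given in that reference, and the only point worth making fully explicit is that a subgroup of the first family can never coincide with one of the second (the former contains no reflection, the latter does), which is needed for the ``exactly once'' claim across the two families.
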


Based on Theorem ~\ref{theo:d-n-subgroups}, following remark can be made.

\begin{myrem}
\label{rem:d-n-subgroup-order}
The order of the subgroups $\langle x^d \rangle$ and $\langle x^d, x^i y \rangle$ are $\frac{n}{d}$ and $\frac{2 n}{d}$ respectively.
\end{myrem}

Every element of $D_n$ is either $x^i$ or $y x^i$ for $0 \le i < n$. The conjugacy classes are small enough in number to be enumerated.

$\begin{array}{rlclcl}
\text{Conjugate} &x^i &\text{by}&x^j &:&(x^j) x^i(x^{-j})=x^i\\
                 &x^i &\text{by}&yx^j&:&(yx^j)x^i(x^{-j}y)= yx^iy=x^{-i}\\
                 &yr^i&\text{by}&r^j &:&(r^j) yr^i(r^{-j}) = yr^{-j}r^ir^{-j} = yr^{i-2j}\\
                 &yx^i&\text{by}&yx^j&:&(yx^j)yx^i(x^{-j}y) = x^{i-2j}y=yx^{2j-i}\\
\end{array}$

The set of rotations decomposes into inverse pairs, $\left\{x^i, \left(x^i\right)^{-1}\right\}$. So, the classes are $\left\{1\right\}, \left\{x, x^{n-1}\right\}$, $\left\{x^2, x^{n-2}\right\}, \ldots$. When $n$ is even, there are $\frac{n}{2}+1$, and when $n$ is odd, there are $\frac{n+1}{2}$ conjugacy classes.

$y x$ is conjugate to $y x^3, y x^5, \ldots$ while $y$ is conjugate to $y x^2, y x^4, \ldots$. If $n$ is even, these two sets are disjoint. However, $y x$ is conjugate to $y x^{n-1}$ (via $x$), so if $n$ is odd, all the non trivial reflections are in one conjugacy class.

So, the total number of conjugacy classes are as follows. If $n$ is even, the total number of conjugacy classes is $\left(\frac{n}{2}+1\right) + 2 = \frac{n}{2}+3$. If $n$ is odd, the total number of conjugacy classes is $\frac{n+1}{2} + 1 = \frac{n+3}{2} $.

The commutators of $D_n$,

\begin{align}
\left[x^i, y x^j\right] &=  x^{-i}\left(y x^j\right)x^i\left(y x^j\right)
\nonumber\\
& = y x^{2i+j}y x^j
\nonumber\\
& = \left(x^i\right)^2
\end{align}

\subsection{Irreducible representations}
\label{sec:irrepdn}
The commutators generate the subgroup of squares of rotations. When $n$ is even, only half the rotations are squares, hence $G/\left[G, G\right]$ is of order four. When $n$ is odd, all rotations are squares, hence $G/\left[G, G\right]$ is of order two. The number of one dimensional irreducible representations is the order of $G/\left[G, G\right]$. So,  when $n$ is even, there are four one dimensional representations and when $n$ is odd, there are two one dimensional representations.

The representations can be enumerated as follows.

\begin{itemize}
\item When $n$ is even:
\begin{itemize}
\item The trivial representation, sending all group elements to the $1 \times 1$ matrix $\begin{pmatrix}1\end{pmatrix}$.
\item The representation, sending all elements in $\langle x \rangle$ to $\begin{pmatrix}1\end{pmatrix}$ and all elements outside $\langle x \rangle$ to $\begin{pmatrix}-1\end{pmatrix}$.
\item The representation, sending all elements in $\langle x^2, y \rangle$ to $\begin{pmatrix}1\end{pmatrix}$ and $x$ to $\begin{pmatrix}-1\end{pmatrix}$.
\item The representation, sending all elements in $\langle x^2, x y \rangle$ to $\begin{pmatrix}1\end{pmatrix}$ and $x$ to $\begin{pmatrix}-1\end{pmatrix}$.
\end{itemize}
\item When $n$ is odd:
\begin{itemize}
\item The trivial representation, sending all group elements to the $1 \times 1$ matrix $\begin{pmatrix}1\end{pmatrix}$.
\item The representation, sending all elements in $\langle x \rangle$ to $\begin{pmatrix}1\end{pmatrix}$ and all elements outside $\langle x \rangle$ to $\begin{pmatrix}-1\end{pmatrix}$.
\end{itemize}
\end{itemize}

The two dimensional irreducible representations are described as follows. There is an obvious subgroup $\{1, x, \ldots, x^{n-1}\}$ which is a cyclic group of order $n$. It can be defined as $C_n < D_{n}$. Since $C_n$ is abelian, it has $n$ irreducible 1-dimensional representations over $\mathbb{C}$, namely

\begin{align}
x\mapsto e^{2\pi ki/n},\qquad 0 &\leq k < n
\end{align}

which captures the idea of rotating by an angle of $2\pi k/n$. These easily-described representations are induced to $D_{n}$ in order to find some possibly new representations.

For the representation $W$ of a subgroup $H\leq G$ (i.e. an H-linear action on $W$), the induced representation of $W$ is 

\begin{align}
\bigoplus_{g\in G/H}g\cdot W
\end{align}

where $g$ ranges over a set of representatives of $G/H$.

The induced representation of $C_n$ to $D_{n}$ for fixed $k$ is straight forward since $D_{n}/C_n$ has representatives $\{1, y\}$. So, one just need to describe the $D_{n}$-vector space $\mathbb{C}\oplus y \cdot\mathbb{C}$ where $\mathbb{C}$ has basis consisting only of $w_1$. Now, the $D_{n}$ action turns into an actual matrix representation.

Specifically, it can be found out how $x$ acts on each summand using the representation of $C_n$: 

\begin{align}
x\cdot w_1 &= e^{2\pi ki/n}w_1, \text{ and}
\end{align}

\begin{align}
x \cdot(y \cdot w_1) &= x y \cdot w_1 
\nonumber\\
&= y x^{-1}\cdot w_1 = e^{-2\pi ki/n} y \cdot w_1
\end{align}

 which means $x$ acts by the matrix 
 $\begin{pmatrix}
 e^{2\pi ki/n}&0\\
 0&e^{-2\pi ki/n}
 \end{pmatrix}
  $.

It can also be figured out how $y$ acts. $y$ obviously takes $w_1$ to $y \cdot w_1$, and $y$ takes $y \cdot w_1$ to $y^2 w_1=w_1$, so $y$ simply interchanges the two summands. This entails  that $y$ acts by the matrix 
$\begin{pmatrix}
0&1\\
1&0
\end{pmatrix}
$.

Here,  the $k$-th two dimensional irreducible representations are listed for the general group elements.

\begin{align}
x \mapsto \begin{pmatrix}
e^{\frac{2 \pi i k}{n}}&0\\
0&e^{-\frac{2 \pi i k}{n}}
\end{pmatrix}
\nonumber\\
x^l \mapsto \begin{pmatrix}
e^{\frac{2 \pi i k l}{n}}&0\\
0&e^{-\frac{2 \pi i k l}{n}}
\end{pmatrix}
\nonumber\\
y \mapsto \begin{pmatrix}
0&1\\
1&0
\end{pmatrix}
\nonumber\\
x^l y \mapsto \begin{pmatrix}
0&e^{\frac{2 \pi i k l}{n}}\\
e^{-\frac{2 \pi i k l}{n}}&0
\end{pmatrix}
\end{align}

It is observed that, $0 \le l \le n-1$. Both $l = 0$ and $l = n$ determine the identity matrix to which the identity element, $e$, is mapped. When $n$ is even, the $k$-th and $\left(n-k\right)$-th representations are equivalent, hence the distinct representations are found only for $k = 1, 2, \ldots, \frac{n-2}{2}$. The representations for $k = 0$ and $k = \frac{n}{2}$ are not irreducible and they decompose into one dimensional representations. On the other hand, when, $n$ is odd, there are $\frac{n-1}{2}$ irreducible representations.

Using the previous calculations,  the total number of irreducible representations for $D_n$ can be computed. When $n$ is even, the total number is $\frac{n-2}{2} + 4 = \frac{n}{2} + 3$. When $n$ is odd, it is $\frac{n-1}{2} + 2 = \frac{n+3}{2}$. 

At this point, the following remark regarding the characters of the irreducible two dimensional representations can be made. 

\begin{myrem}
\label{rem:char-dn}
The characters of the representations of the elements of type $y$ and $x^l y$ are both zeros. The representations of the elements of type $x$ have the same character, $2 \cos \left(\frac{2 \pi k}{n}\right)$. Finally, the representations of the elements of type $x^l$ have the same character, $2 \cos \left(\frac{2 \pi k l}{n}\right)$.
\end{myrem}

\subsection{Sampling the representations}
The interest of this paper with the dihedral group $D_n$, of order $2 n$, is based on the fact that it is the automorphism group of the $n$-cycle graph. A $p$-group is a group where the order of every group element is a power of the prime $p$.  So, $D_n$ can be a $p$-group only when $n = 2^m$, when $m \in \mathbb{Z}$, because that is when $x^{2^m} = y^{2^1} = 1$ for $p = 2$. 

The restrictions from the irreducible representations of $S_n$ to $D_n$ is straight forward. For any irreducible representation $\rho$ of $S_n$, its restriction to $D_n$ is $\rho \left(h\right)$ for all $h \in D_n$. This new representation may not be necessarily irreducible.

The inductions from the irreducible representations of $D_n$ to $S_n$ can be computed following the Definition ~\ref{def:induction}. This new representation may also not be necessarily irreducible. 

Following \cite{hallgren2000normal}, this paper seeks to compute the induced representation of  $\text{\bf 1}_{D_n}$ which is the trivial representation of the dihedral group $D_n$. As a prerequisite,  a transversal for the left cosets of $D_n$ needs to be computed which can be done by any of the two classical algorithms presented in Section $4.6.7$ of \cite{holt2005handbook}. The computation of  a transversal of a group requires the computation of the base of a group which can be computed in polynomial time using the Schreier-Sims algorithm \cite{seress2003permutation, sims1970computational, knuth1991efficient}.

As it has been mentioned previously in the current section, there are $l = \frac{n!}{2n} \approx \frac{\sqrt{2 \pi n} \left(\frac{n}{e}\right)^n}{2n}$ cosets for $D_n$ in $S_n$. This will also be the number of elements in a transversal for the left cosets of $D_n$ in $S_n$. Let the transversal be $t_1, \ldots, t_l$. So, $S_n = t_1 D_n \sqcup \ldots \sqcup t_l D_n$.  $\text{\bf 1}_{D_n} \uparrow^{S_n}_{D_n}$ can be computed following Definition ~\ref{def:induction}. 

It would be instructive to discuss the character table of $S_n$ and $D_n$ here.

Computing the character table of $S_n$ starts from computing the partitions $\lambda_1 \ge \lambda_2 \ge \ldots \ge \lambda_r$ given $\sum_i \lambda = n$. These partitions can be partially ordered as follows. If there are two partitions $\lambda = \left(\lambda_1 \ge \lambda_2 \ge \ldots\right)$ and $\mu = \left(\mu_1 \ge \mu_2 \ge \ldots\right)$, $\lambda \ge \mu$ if,
\begin{align}
\lambda_1 &\ge \mu_1
\nonumber\\
\lambda_1 + \lambda_2 &\ge \mu_1 + \mu_2
\nonumber\\
\lambda_1 + \lambda_2 + \lambda_3 &\ge \mu_1 + \mu_2 + \mu_3
\nonumber\\
&\vdots
\end{align}

The columns of the character table are indexed by the conjugacy classes such that the partitions are arranged in increasing order. On the other hand, the rows are indexed by the characters in the decreasing order of the partitions. Each cell in the table then contains the corresponding character.

The Lemma ~\ref{lem:frob-recip-special} can be applied to obtain the Frobenius reciprocity for $D_n < S_n$.

\begin{align}
\langle \chi_{\text{\bf 1}_{D_n}}, \chi_\rho \downarrow^{S_n}_{D_n} \rangle_{D_n} &= \langle {\chi \uparrow^{S_n}_{D_n}}_{\text{\bf 1}_{D_n}} , \chi_\rho \rangle_{S_n}
\end{align}
where the left inner product is calculated in $S_n$ and the right one in $D_n$. So, if  $\langle \chi_{\text{\bf 1}_{D_n}}, \chi_\rho \downarrow^{S_n}_{D_n} \rangle_{D_n}$ needs to be determined, it would be sufficient to determine $\langle {\chi \uparrow^{S_n}_{D_n}}_{\text{\bf 1}_{D_n}} , \chi_\rho \rangle_{S_n}$.

Following the Definition ~\ref{def:inner-prod-char}, 

\begin{align}
\langle {\chi \uparrow^{S_n}_{D_n}}_{\mathbf{ 1}_{D_n}} , \chi_\rho \rangle_{S_n} &= \sum_{g_i \in S_n}  {\chi \uparrow^{S_n}_{D_n}}_{\mathbf{ 1}_{D_n}} (g_i) \chi^\dagger_\rho (g_i)  
\nonumber\\
&=\sum_{g_i \in S_n} \left(\sum^n_{i=1} \delta_{{{ \chi \uparrow^{S_n}_{D_n}}_{\mathbf{ 1}_{D_n}} (g_i)}_{i,i}}\right)   \chi^\dagger_\rho (g_i)  
\end{align}

The Theorem ~\ref{theo:frob-char} is applied to determine the probabilities of measuring the irreducible representations of $D_n$ through proving a few lemmas. 

\begin{mylem}
\label{lem:d-n-1-d-rep-prob-zero}
The probability of measuring the labels of one dimensional irreducible representations of $D_n$ is zero for non-trivial representations.
\end{mylem}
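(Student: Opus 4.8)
The plan is to feed each nontrivial one-dimensional irreducible representation $\sigma$ of $D_n$ into the probability formula of Theorem~\ref{theo:frob-char} and show that the character inner product controlling that probability vanishes. Theorem~\ref{theo:frob-char} expresses the sampling probability attached to $\sigma$ as a strictly positive multiple of $\langle \chi_\sigma, \chi_{\mathbf{1}_{D_n}} \rangle_{D_n}$, the multiplicity with which the trivial representation occurs; since $d_\sigma = 1$ for a one-dimensional representation, the prefactor $\tfrac{|H|}{|G|}\,d_\sigma$ is finite and nonzero, so the sign of the probability is decided \emph{solely} by this inner product. The whole weight of the argument is therefore to establish $\langle \chi_\sigma, \chi_{\mathbf{1}_{D_n}} \rangle_{D_n} = 0$.

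First I would record the explicit one-dimensional characters enumerated in Section~\ref{sec:irrepdn}, handling the even and odd cases separately. For $n$ odd there is a single nontrivial one-dimensional character, the sign character sending $x^i \mapsto 1$ and $x^i y \mapsto -1$; for $n$ even there are three, obtained by combining this sign-on-reflections character with the parity character $x \mapsto -1$. In every case the character is real and takes values in $\{+1,-1\}$, and $\chi_{\mathbf{1}_{D_n}}^\dagger(g) = 1$ for all $g$, so by Definition~\ref{def:inner-prod-char} the inner product collapses to $\tfrac{1}{2n}\sum_{g \in D_n}\sigma(g)$. Splitting $D_n$ into the $n$ rotations $x^i$ and the $n$ reflections $x^i y$, each character yields a cancelling pattern: the sign character cancels the $n$ rotations (all $+1$) against the $n$ reflections (all $-1$), while the parity-type characters cancel within the rotations via $\sum_{i=0}^{n-1}(-1)^i = 0$ for even $n$. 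Hence $\sum_{g \in D_n}\sigma(g) = 0$ in each case, giving $\langle \chi_\sigma, \chi_{\mathbf{1}_{D_n}} \rangle_{D_n} = 0$ and thus zero sampling probability. The same vanishing follows abstractly from the orthogonality of distinct irreducible characters, any nontrivial irreducible $\sigma$ being orthogonal to the trivial character; I would nonetheless present the explicit summation to make the even/odd dependence transparent and to dovetail with the companion two-dimensional computation in Lemma~\ref{lem:d-n-2-d-rep-prob-zero}.

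The step requiring the most care is the bookkeeping rather than the cancellation: one must invoke the correct list of one-dimensional representations for each parity of $n$ (three versus one), confirm that the prefactor $\tfrac{|H|}{|G|}\,d_\sigma$ is nonzero so that it cannot accidentally rescue a vanishing inner product, and keep the trivial representation excluded, since for it the inner product equals $1$ rather than $0$. Once the character values from Section~\ref{sec:irrepdn} are in hand, the remaining arithmetic is a routine geometric-sum cancellation, and the conclusion that weak sampling assigns probability zero to every nontrivial one-dimensional label is immediate.
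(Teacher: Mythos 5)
Your proposal is correct and follows essentially the same route as the paper: apply Theorem~\ref{theo:frob-char} to reduce the sampling probability to the inner product $\langle \chi_{\rho_i}, \chi_{\mathbf{1}_{D_n}} \rangle_{D_n}$, then show this vanishes for each nontrivial one-dimensional character by explicit cancellation of the $\pm 1$ values over the $2n$ group elements. Your added remark that the vanishing is just orthogonality of distinct irreducible characters, and your more explicit treatment of the odd-$n$ case, are small improvements in rigor over the paper's presentation but do not constitute a different argument.
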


\begin{proof}[Proof of Lemma ~\ref{lem:d-n-1-d-rep-prob-zero}]
First, it is assumed that $n$ is even. So, there are four one dimensional representations and $\frac{n-2}{2}$ two dimensional representations. The probability of measuring the one dimensional representations is then computed. Let the representations be denoted as $\rho_1$, $\rho_2$, $\rho_3$, and $\rho_4$. So, their dimensions are all the same i.e. $d_{\rho_1} = d_{\rho_2} = d_{\rho_3} = d_{\rho_4} = 1$. Let the probability of measuring $\rho_i$ be $p_{\rho_i}$. So, following Theorem ~\ref{theo:frob-char},

\begin{align}
p_{\rho_i} &= \frac{|D_n|}{|S_n|} d_{\rho_i} \langle \chi_{\rho_i}, \chi_{\text{\bf 1}_{D_n}} \rangle_{D_n}
\nonumber\\
&= \frac{2 n}{n!}  \langle \chi_{\rho_i}, \chi_{\text{\bf 1}_{D_n}} \rangle_{D_n}
\end{align}

$D_n$ has $2 n$ elements. So, 

\begin{align}
\langle \chi_{\rho_i}, \chi_{\text{\bf 1}_{D_n}} \rangle_{D_n} &= \frac{1}{|D_n|} \sum_{g\in D_n} \chi_{\rho_i} \left(g\right) \chi^\dagger_{\text{\bf 1}_{D_n}} \left(g\right)
\nonumber\\
&= \frac{1}{2 n} \sum_{g\in D_n} \chi_{\rho_i} \left(g\right) 
\end{align}

When $i = 1$, $\rho_1$ is the trivial representation which sends all group elements to the $1 \times 1$ matrix $\begin{pmatrix}1\end{pmatrix}$. The probability of measuring this representation is given below.

\begin{align}
\langle \chi_{\rho_1}, \chi_{\text{\bf 1}_{D_n}} \rangle_{D_n} &= \frac{1}{2 n} \sum_{g\in D_n} \chi_{\rho_1} \left(g\right) 
\nonumber\\
&= 1
\end{align}

So, 
\begin{align}
p_{\rho_1} &=  \frac{2 n}{n!}  \langle \chi_{\rho_1}, \chi_{\text{\bf 1}_{D_n}} \rangle_{D_n}
\nonumber\\
&=  \frac{2 n}{n!}  
\end{align}

When $i = 2$, $\rho_2$ is the representation which sends all elements in $\langle x \rangle$ to $\begin{pmatrix}1\end{pmatrix}$ and all elements outside $\langle x \rangle$ to $\begin{pmatrix}-1\end{pmatrix}$. The probability of measuring this representation is given below.

\begin{align}
\langle \chi_{\rho_2}, \chi_{\text{\bf 1}_{D_n}} \rangle_{D_n} &= \frac{1}{2 n} \sum_{g\in D_n} \chi_{\rho_2} \left(g\right) 
\end{align}

As observed from the presentation of $D_n$, the number of elements in $\langle x \rangle$ is $n$. So, $n$ elements will be mapped to $1$ and $n$ elements will e mapped to $-1$. So,

\begin{align}
\langle \chi_{\rho_2}, \chi_{\text{\bf 1}_{D_n}} \rangle_{D_n} &= \frac{1}{2 n} \left(n \left(1\right) + n \left(-1\right)\right) 
\nonumber\\
&= 0
\end{align}

So, 
\begin{align}
p_{\rho_2} &=  \frac{2 n}{n!}  \langle \chi_{\rho_2}, \chi_{\text{\bf 1}_{D_n}} \rangle_{D_n}
\nonumber\\
&= 0
\end{align}

So, the weak Fourier sampling will not be able to determine the labels of the sign representation which sends all elements in $\langle x \rangle$ to $\begin{pmatrix}1\end{pmatrix}$ and all elements outside $\langle x \rangle$ to $\begin{pmatrix}-1\end{pmatrix}$.

When $i = 3$, $\rho_3$ is the representation which sends all elements in $\langle x^2, y \rangle$ to $\begin{pmatrix}1\end{pmatrix}$ and $x$ to $\begin{pmatrix}-1\end{pmatrix}$. The probability of measuring this representation is given below.

\begin{align}
\langle \chi_{\rho_3}, \chi_{\text{\bf 1}_{D_n}} \rangle_{D_n} &= \frac{1}{2 n} \sum_{g\in D_n} \chi_{\rho_3} \left(g\right) 
\end{align}

Following Remark ~\ref{rem:d-n-subgroup-order}, $\rho_3$ sends $n$ elements of $D_n$ to $\begin{pmatrix}1\end{pmatrix}$ and $2 n - n = n$ elements of $D_n$ to $\begin{pmatrix}-1\end{pmatrix}$. So,

\begin{align}
\langle \chi_{\rho_3}, \chi_{\text{\bf 1}_{D_n}} \rangle_{D_n} &= 0
\end{align}

So, the weak Fourier sampling will not be able to determine the labels of the sign representation which sends all elements in $\langle x^2, y \rangle$ to $\begin{pmatrix}1\end{pmatrix}$ and all elements outside $\langle x \rangle$ to $\begin{pmatrix}-1\end{pmatrix}$.

When $i = 4$, $\rho_4$ is the representation which sends all elements in $\langle x^2, x y \rangle$ to $\begin{pmatrix}1\end{pmatrix}$ and $x$ to $\begin{pmatrix}-1\end{pmatrix}$. The probability of measuring this representation is given below.

\begin{align}
\langle \chi_{\rho_4}, \chi_{\text{\bf 1}_{D_n}} \rangle_{D_n} &= \frac{1}{2 n} \sum_{g\in D_n} \chi_{\rho_4} \left(g\right) 
\end{align}

Following Remark ~\ref{rem:d-n-subgroup-order}, $\rho_4$ sends $n$ elements of $D_n$ to $\begin{pmatrix}1\end{pmatrix}$ and $2 n - n = n$ elements of $D_n$ to $\begin{pmatrix}-1\end{pmatrix}$. So,

\begin{align}
\langle \chi_{\rho_4}, \chi_{\text{\bf 1}_{D_n}} \rangle_{D_n} &= 0
\end{align}

So, the weak Fourier sampling will not be able to determine the labels of the sign representation which sends all elements in $\langle x^2, x y \rangle$ to $\begin{pmatrix}1\end{pmatrix}$ and all elements outside $\langle x \rangle$ to $\begin{pmatrix}-1\end{pmatrix}$.

The case when $n$ is odd may be considered as a special case of when $n$ is even and show that only the trivial representation can be sampled with non-zero probability.
\end{proof}

Now, the probability of measuring the labels of two dimensional irreducible representations will be computed in weak Fourier sampling. The discussion starts with the following lemma.

\begin{mylem}
\label{lem:d-n-2-d-rep-prob-zero}
The probability of measuring the labels of two dimensional irreducible representations of $D_n$ is always zero.
\end{mylem}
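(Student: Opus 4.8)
The plan is to apply Theorem~\ref{theo:frob-char} in exactly the manner used for the one-dimensional case in the proof of Lemma~\ref{lem:d-n-1-d-rep-prob-zero}, thereby reducing the statement to a single inner-product computation. For any two-dimensional irreducible representation $\rho$ of $D_n$ we have $d_\rho = 2 \neq 0$, and the prefactor $\frac{|D_n|}{|S_n|} = \frac{2n}{n!}$ is nonzero, so the probability
\[
p_\rho = \frac{|D_n|}{|S_n|} d_\rho \langle \chi_\rho, \chi_{\text{\bf 1}_{D_n}} \rangle_{D_n}
\]
vanishes precisely when $\langle \chi_\rho, \chi_{\text{\bf 1}_{D_n}} \rangle_{D_n} = 0$. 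Hence the entire burden falls on showing that this inner product is zero for every two-dimensional irreducible $\rho$.

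First I would expand the inner product using Definition~\ref{def:inner-prod-char} together with the fact that the trivial character $\chi_{\text{\bf 1}_{D_n}}$ is identically $1$, giving
\[
\langle \chi_\rho, \chi_{\text{\bf 1}_{D_n}} \rangle_{D_n} = \frac{1}{2n}\sum_{g \in D_n} \chi_\rho(g).
\]
Then I would split the sum over the two families of elements of $D_n$, the rotations $x^l$ and the reflections $x^l y$, invoking Remark~\ref{rem:char-dn}. Since the characters of all elements of type $y$ and $x^l y$ are zero, the reflections contribute nothing and only the rotational part survives:
\[
\sum_{g \in D_n} \chi_\rho(g) = \sum_{l=0}^{n-1} \chi_\rho(x^l) = \sum_{l=0}^{n-1} 2\cos\!\left(\frac{2\pi k l}{n}\right).
\]

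The remaining step is to evaluate this trigonometric sum, which I would do by recognizing it as twice the real part of the geometric series $\sum_{l=0}^{n-1} e^{2\pi i k l / n}$. For any $k$ with $k \not\equiv 0 \pmod n$ this series equals $\frac{1 - e^{2\pi i k}}{1 - e^{2\pi i k / n}} = 0$, because the numerator vanishes while the denominator does not. As established in Section~\ref{sec:irrepdn}, the two-dimensional irreducible representations correspond exactly to $k = 1, \ldots, \frac{n-2}{2}$ when $n$ is even and to $k = 1, \ldots, \frac{n-1}{2}$ when $n$ is odd; in every case $0 < k < n$, so $k \not\equiv 0 \pmod n$ and the sum is zero. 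Consequently $\langle \chi_\rho, \chi_{\text{\bf 1}_{D_n}} \rangle_{D_n} = 0$, and therefore $p_\rho = 0$ for every two-dimensional irreducible representation.

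I do not anticipate a genuine obstacle: at bottom this is just the orthogonality of the irreducible characters of $D_n$, namely that the trivial representation appears with multiplicity zero in a distinct irreducible $\rho$. The only point demanding care is confirming that the index $k$ never reduces to $0$ modulo $n$ for a genuinely two-dimensional irreducible, which is what keeps the geometric series nondegenerate and guarantees the cancellation; the degenerate values $k = 0$ and (for even $n$) $k = \tfrac{n}{2}$ are precisely the ones that were already excluded in Section~\ref{sec:irrepdn} as reducible. The odd-$n$ case is handled identically, since the reflections still contribute zero character regardless of how many conjugacy classes they form, so no separate argument is required.
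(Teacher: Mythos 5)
Your proposal is correct and follows essentially the same route as the paper's proof: reduce to the inner product with the trivial character via Theorem~\ref{theo:frob-char}, discard the reflections using Remark~\ref{rem:char-dn}, and show the remaining cosine sum over the rotations vanishes. The only cosmetic difference is that you evaluate that sum as twice the real part of the geometric series $\sum_{l=0}^{n-1} e^{2\pi i k l/n}$, while the paper invokes the closed-form identity for a sum of cosines in arithmetic progression; both yield zero for $0 < k < n$.
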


\begin{proof}[Proof of Lemma ~\ref{lem:d-n-2-d-rep-prob-zero}]
When $n$ is even, there are $\frac{n-2}{2}$ such irreducible representations. Let the irreducible representations be denoted as $\sigma_1, \sigma_2, \ldots, \sigma_k, \dots, \sigma_{\frac{n-2}{2}}$.

The probability of measuring $\sigma_k$ is given below.

\begin{align}
\langle \chi_{\sigma_k}, \chi_{\text{\bf 1}_{D_n}} \rangle_{D_n} &= \frac{1}{2 n} \sum_{g\in D_n} \chi_{\sigma_k} \left(g\right) 
\end{align}

Following Remark ~\ref{rem:char-dn}, $\sigma_k$ maps $n$ number of elements to the matrices for which the characters of the representations are zero. For the rest $n$ number  of the group elements, the character is $2 \cos \left(\frac{2 \pi k l}{n}\right)$ where $0\le l \le n-1$. So,

\begin{align}
\langle \chi_{\sigma_k}, \chi_{\text{\bf 1}_{D_n}} \rangle_{D_n} &= \frac{1}{2 n} \sum^{n-1}_{l = 0} 2 \cos \left(\frac{2 \pi k l}{n}\right)
\nonumber\\
&= \frac{1}{ n} \sum^{n-1}_{l = 0}  \cos \left(\frac{2 \pi k l}{n}\right)
\end{align}

The following formula for the sum of series of cosines when they are in arithmetic progression is worth mentioning as they have been proven in both \cite{knapp2009sines} and ~\cite{holdener2009math}.

\begin{align}
\sum_{l=1}^{n} \cos (l\theta)=\frac{\sin(n\theta/2)}{\sin(\theta/2)}\cos ((n+1)\theta/2),\quad \sin(\theta/2)\neq0.
\end{align}

When $\theta=\dfrac{2 \pi k}{n}$,

\begin{align}
\sum^n_{l=1} \cos \left(\frac{2 \pi k l}{n}\right) &=\frac{\sin(\pi k)}{\sin(\pi k/n)}\cos ((n+1)\pi k/n)= 0
\end{align}

If the interval of $l$ is changed from $\left[1, n\right]$ to $\left[0, n-1\right]$ the sum still remains zero.

\begin{align}
\sum^{n-1}_{l=0} \cos \left(\frac{2 \pi k l}{n}\right) &= \cos (0)+0-\cos(2\pi k)=0.
\end{align}

So, the probability of measuring the labels of the two dimensional irreducible representations is:

\begin{align}
\langle \chi_{\sigma_k}, \chi_{\text{\bf 1}_{D_n}} \rangle_{D_n} &= \frac{1}{ n} \sum^{n-1}_{l = 0}  \cos \left(\frac{2 \pi k l}{n}\right)
\nonumber\\
&= 0
\end{align}

The case of $n$ being odd can be considered as a special case of $n$ being even and the same result can be proved.
\end{proof}

So, the weak Fourier sampling algorithm cannot determine the labels of any of the two dimensional irreducible representations. To summarize, weak Fourier sampling cannot determine any irreducible representation other than the trivial one. So, it cannot determine the automorphism group of the cycle graph which is a trivial problem in the classical paradigm. Lemmas ~\ref{lem:d-n-1-d-rep-prob-zero}  and ~\ref{lem:d-n-2-d-rep-prob-zero} may be consolidated into the following theorem.

\begin{mytheo}
\label{theo:cycle-graph-auto-weak-fails}
Weak quantum Fourier sampling fails to solve the cycle graph automorphism problem.
\end{mytheo}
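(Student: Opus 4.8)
The plan is to obtain Theorem~\ref{theo:cycle-graph-auto-weak-fails} as a direct consolidation of Lemmas~\ref{lem:d-n-1-d-rep-prob-zero} and~\ref{lem:d-n-2-d-rep-prob-zero}, so that essentially no new computation is required; the work lies in assembling the two vanishing results into a single statement about the information carried by the sampling distribution. First I would recall the classification of the irreducible representations of the hidden subgroup $\text{Aut}(\Gamma) = D_n$ established in Section~\ref{sec:irrepdn}: when $n$ is even there are four one-dimensional and $\frac{n-2}{2}$ two-dimensional irreducibles, and when $n$ is odd there are two one-dimensional and $\frac{n-1}{2}$ two-dimensional ones. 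By Theorem~\ref{theo:frob-char}, the probability of observing a representation in weak Fourier sampling is proportional to the inner product $\langle \chi_\rho, \chi_{\mathbf{1}_{D_n}}\rangle_{D_n}$ weighted by the dimension $d_\rho$, so the entire sampling distribution is pinned down by these pairings.

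Next I would invoke the two lemmas in turn to account for the whole list of irreducibles. Lemma~\ref{lem:d-n-1-d-rep-prob-zero} gives $\langle \chi_{\rho_i}, \chi_{\mathbf{1}_{D_n}}\rangle_{D_n}=0$ for every non-trivial one-dimensional representation (indices $i=2,3,4$ when $n$ is even, $i=2$ when $n$ is odd), whence $p_{\rho_i}=0$. Lemma~\ref{lem:d-n-2-d-rep-prob-zero}, through the closed form for the sum of cosines in arithmetic progression, gives $\langle \chi_{\sigma_k}, \chi_{\mathbf{1}_{D_n}}\rangle_{D_n}=0$ for every two-dimensional irreducible $\sigma_k$. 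Since the classification above is exhaustive, combining the two lemmas shows that every irreducible representation except the trivial one is sampled with probability exactly zero, and the only label observable with non-zero probability is that of the trivial representation, whose inner product equals $1$. The odd-$n$ case is subsumed as a special case of the even-$n$ argument exactly as in the two lemmas.

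From here I would argue that such a degenerate distribution cannot reconstruct $D_n$. Because the pairing $\langle \chi_{\mathbf{1}_{D_n}}, \chi_{\mathbf{1}_{D_n}}\rangle_{D_n}=1$ holds regardless of any structural feature of the hidden subgroup, the single observable label is uninformative: it is the one representation common to every group and always sampled with non-zero weight. The non-trivial irreducibles of $D_n$---precisely the labels whose observation would encode the rotational and reflective symmetries that characterize the group---are exactly the ones that are never seen. Consequently no sequence of weak Fourier samples can yield a label that depends on $D_n$, and reconstructing $\text{Aut}(\Gamma)$ is impossible, which is the assertion of the theorem.

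The main obstacle I anticipate is not the algebra, which is fully discharged by the two lemmas, but the precise formalization of \emph{failure}. One must state clearly that a point mass on the trivial representation is information-theoretically equivalent to a distribution independent of the hidden subgroup, so that no classical post-processing of the sampled labels can output $D_n$. I would phrase this last step as the observation that the support of the weak-sampling output distribution is a singleton that is fixed across all candidate automorphism groups, and therefore carries zero mutual information with the identity of $\text{Aut}(\Gamma)$; making this step airtight, rather than merely asserting ``no information,'' is where the care is needed.
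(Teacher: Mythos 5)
Your proposal is correct and follows essentially the same route as the paper: the theorem is obtained by consolidating Lemmas~\ref{lem:d-n-1-d-rep-prob-zero} and~\ref{lem:d-n-2-d-rep-prob-zero}, observing that the one- and two-dimensional cases exhaust the irreducible representations of $D_n$, so only the uninformative trivial label is ever sampled. Your added care in formalizing ``failure'' (the output distribution being a fixed point mass independent of the hidden subgroup) is a refinement of, not a departure from, the paper's argument.
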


\begin{proof}
Follows directly from the proofs of the Lemmas ~\ref{lem:d-n-1-d-rep-prob-zero}  and ~\ref{lem:d-n-2-d-rep-prob-zero}.
\end{proof}

It is observed that the success probability of strong Fourier sampling is a conditional probability which depends on the success probability of measuring the labels of representation. As the success probability of measuring the non-trivial representations of $D_n$ is zero, the success probability of measuring their individual matrix elements is also zero.

\begin{mycor}
\label{cor:cycle-graph-auto-strong-fails}
Strong quantum Fourier sampling fails to solve the cycle graph automorphism problem.
\end{mycor}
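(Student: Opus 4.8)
The plan is to exploit the fact that strong Fourier sampling refines weak Fourier sampling: both act on the identical post-transform state produced in Algorithm ~\ref{algo:weak}, and the weak measurement is obtained from the strong one simply by discarding (tracing out) the two index registers $i$ and $j$. Consequently the weak-sampling probability $p_\rho$ of observing a label $\rho$ is exactly the marginal of the joint strong-sampling distribution,
\begin{align}
p_\rho &= \sum_{1 \le i, j \le d_\rho} P(\rho, i, j),
\end{align}
where $P(\rho, i, j) = \frac{d_\rho}{|S_n|}\frac{1}{|D_n|}\left|\left(\sum_{h \in D_n}\rho(ch)\right)_{i,j}\right|^2$ is the probability of the strong-sampling outcome $(\rho, i, j)$.

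First I would record that every summand $P(\rho, i, j)$ is a nonnegative real number, being a squared modulus times the nonnegative prefactor $d_\rho / (|S_n|\,|D_n|)$. Hence each individual term is bounded above by the whole sum, that is, $0 \le P(\rho, i, j) \le p_\rho$ for every pair $(i, j)$.

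Next I would invoke Theorem ~\ref{theo:cycle-graph-auto-weak-fails} (equivalently Lemmas ~\ref{lem:d-n-1-d-rep-prob-zero} and ~\ref{lem:d-n-2-d-rep-prob-zero}), which establish that $p_\rho = 0$ for every non-trivial irreducible representation $\rho$ of $D_n$. Combined with the bound above, this forces $P(\rho, i, j) = 0$ for all $i, j$ whenever $\rho$ is non-trivial. Thus strong sampling observes any non-trivial $\rho$, and hence any of its matrix entries, with probability exactly zero, so the only outcome realised with positive probability is the trivial representation. Since the trivial representation is common to both the trivial and non-trivial automorphism cases, it conveys no distinguishing information, which yields the failure of strong Fourier sampling.

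The hard part, such as it is, will be justifying the marginalization identity rather than any analytic estimate: one must argue carefully that the state on which strong sampling acts is the same as that on which weak sampling acts, and that the weak measurement is precisely the coarse-graining of the strong one over the index registers. Once this is in place, the corollary reduces to the elementary observation that a finite sum of nonnegative terms equal to zero forces each term to vanish, so no genuinely new computation beyond the weak-sampling analysis is required.
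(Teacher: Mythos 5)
Your proposal is correct and follows essentially the same route as the paper, which likewise derives the corollary from Theorem~\ref{theo:cycle-graph-auto-weak-fails} by observing that the strong-sampling outcome probabilities for a non-trivial $\rho$ are controlled by the (zero) probability of observing the label $\rho$ itself; your marginalization identity $p_\rho = \sum_{i,j} P(\rho,i,j)$ with nonnegative summands is just a slightly more explicit phrasing of the paper's conditional-probability remark. No substantive difference in approach.
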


So far, the discussion in this paper has been limited to the POVMs in the computational basis. One could ask if the results are comparable if one  were performing entangled measurements introduced in \cite{hallgren2010limitations, moore2005tight}. It can be argued that quantum Fourier transform is guaranteed to fail also in the latter type of measurement. To perform an entangled measurement, it is assumed that one has already measured the labels of the irreducible representations with nonzero probability which is not possible for cycle graphs. Hence, the automorphism group of a cycle graph cannot be computed using the measurements in either computational or entangled bases. It is also natural to ask how the results relate to the results of \cite{radhakrishnan2005power} on using random bases. As the bases are always being chosen from the complete set of computational bases, one can argue that strong random Fourier sampling does not change the probability of measuring the labels of irreducible representations from zero (where applicable) to a larger value. It can also be argued in a similar manner for both Kuperberg sieve \cite{kuperberg2005subexponential, moore2010impossibility} and Pretty Good Measurement (PGM) as both algorithms are conditioned on weak sampling first. The Ettinger-H{\o}yer-Knill theorem \cite{ettinger2004quantum} that the quantum query complexity of the hidden subgroup problem is polynomial may also be mentioned here. The theorem cannot be applied here either as the assumption of the theorem is that the quantum algorithm will always output a subset of the hidden subgroup.

\section{Quantum Fourier transform for other graph automorphism problems}
\label{sec:qft-ga-plus}
In Section ~\ref{sec:weak-cycle}, it has been shown that the  Fourier sampling fails to determine the automorphism group of the cycle graphs. It is natural to ask whether same is the case for the graphs which have the dihedral group as a subgroup in it's automorphism group. The question can be answered in the affirmative. Those cases can be instantiated using any of the following three approaches.

\subsection{Inductive approach}
This approach starts with a graph whose automorphism group is a dihedral group. Then, more complicated structures are inductively built on that graph. This is based of the following theorem reproduced from \cite{dresselhaus2007group} and also mentioned as Theorem $10$ in Section $3.2$ in \cite{serre2012linear}. To initiate the discussion,  an example can be provided, where a more complex graph is constructed from cycle graphs and show that quantum Fourier sampling is still guaranteed to fail to compute the automorphism group.

\begin{myexamp}[QFT for the automorphism group of $C_m \sqcup C_n$]
\label{examp:c-m-c-n}
The graph $C_m \sqcup C_n$ can be visualized as follows.

It is already known from \cite{ganesan2012automorphism, jordan1869assemblages} that the automorphism group of $C_m \sqcup C_n$ is $D_m \times D_n$ where the dihedral groups are of order $2 m$ and $2 n$ respectively.

Following the schemes introduced in Section ~\ref{sec:weak-cycle},  the probabilities of measuring the labels irreducible representations of $D_m \times D_n$ are computed.

The Theorem $10 (ii)$ in \cite{serre2012linear} indicates that every irreducible representation of $D_m \times D_n$ can be determined from the irreducible representations of $D_m$ and $D_n$.

The general expression for irreducible representations of dihedral groups is used as it is  given in Section ~\ref{sec:weak-cycle}. Let the $i$-th irreducible $j$-dimensional representation of a dihedral group $D_n$ be $\rho_{i,j, n}$ where $n$ is even. For example, for $D_m$, the irreducible representations are $\rho_{1,1, m}$, $\rho_{2,1, m}$, $\rho_{3,1, m}$, $\rho_{4,1, m}$, $\rho_{1,2, m}$, $\rho_{2,2, m}$, $\rho_{3,2, m}$, and $\rho_{4,2, m}$. The expressions of these representations can be derived from Section ~\ref{sec:irrepdn}. It needs to be noted that in the two dimensional irreducible representations, there is an additional ordering parameter $k$. Hence, $k_m$ and $k_n$ will be used for the groups $D_m$ and $D_n$ respectively. So, for example, $\begin{pmatrix}
e^{\frac{2 \pi i k_{n}}{n}}&0\\
0&e^{-\frac{2 \pi i k_{n}}{n}}
\end{pmatrix}$ denotes the $k_{n}$-th two dimensional irreducible representation for the element $x$. Another parameter $l$ is used to identify the elements $x^l$ in the group $D_n$ where $0 \le l \le n-1$. Let the parameter be $l_m$ for $D_m$ and $l_n$ for $D_n$.

Now, the irreducible representations of $D_m \times D_n$ are enumerated. The corresponding characters are also computed at the same time. There are sixteen one dimensional, thirty two two dimensional, and sixteen four dimensional irreducible representations. Table ~\ref{table:irrep-cn-cm} summarizes the characters of the irreducible representations of $D_m \times D_n$.  Curious readers may refer to Appendix ~\ref{app:cmcn} for detailed derivations.

\begin{table}[H]
\centering
\begin{tabular}{ c| c   } 
 \hline
Irreducible representation  & Character\\ 
 \hline\hline
  $\rho_{7, 8, 15, 16, 23, 24, 31, 32, 39, 40, 47-64} $ &  $\chi_{7, 8, 15, 16, 23, 24, 31, 32, 39, 40, 47-64} = 0$\\
   $\rho_{1}$ & $\chi_{1} = 1$ \\
 $\rho_{2-4, 9-12, 17-20, 25-28}$ & $\chi_{2-4, 9-12, 17-20, 25-28} = \pm 1$ \\
 $\rho_5 $ &  $\chi_5 =  2 \cos \left(\frac{2 \pi k_m}{m}\right)$\\
 $\rho_{6, 22} $ &  $\chi_{6, 22} = 2 \cos \left(\frac{2 \pi k_m l_m}{m}\right)$\\
 $\rho_{13, 21, 29} $ & $\chi_{13, 21, 29} = \pm 2 \cos \left(\frac{2 \pi k_m}{m}\right)$\\
 $\rho_{14, 30} $ & $\chi_{14, 30} = \pm  2 \cos \left(\frac{2 \pi k_m l_m}{m}\right)$\\
 $\rho_{33} $ & $\chi_{33} = 2 \cos \left(\frac{2 \pi k_n}{n}\right)$\\
 $\rho_{34-36}$ & $\chi_{34-36} =\pm 2 \cos \left(\frac{2 \pi k_n}{n}\right)$\\
 $\rho_{37} $ & $\chi_{37} = 4 \cos \left(\frac{2 \pi  k_m}{m}\right) \cos \left(\frac{2 \pi  k_n}{n}\right)$\\
 $\rho_{38} $ & $\chi_{38} = 4 \cos \left(\frac{2 \pi  k_n}{n}\right) \cos \left(\frac{2 \pi  k_m l_m}{m}\right)$\\
 $\rho_{41} $ & $\chi_{41} = 2 \cos \left(\frac{2 \pi k_n l_n}{n}\right)$\\
 $\rho_{42-44} $ & $\chi_{42-44} =\pm 2 \cos \left(\frac{2 \pi k_n l_n}{n}\right)$\\
 $\rho_{45}$ & $\chi_{45} = 4 \cos \left(\frac{2 \pi  k_m}{m}\right) \cos \left(\frac{2 \pi  k_n l_n }{n}\right)$\\
 $\rho_{46} $ & $\chi_{46} = 4 \cos \left(\frac{2 \pi  k_m l_m}{m}\right) \cos \left(\frac{2 \pi  k_n l_n }{n}\right)$\\
\hline
\end{tabular}
\caption{Irreducible representations of $\text{Aut} \left(C_m \sqcup C_n\right)$}
\label{table:irrep-cn-cm}
\end{table}

It can be readily seen that the quantum Fourier transform is guaranteed to fail to measure the labels of   $\rho_{7, 8, 15, 16, 23, 24, 31, 32, 39, 40, 47-64}$. With little more algebraic steps, it can be shown to be true for few more irreducible representations.
\end{myexamp}

Following Example ~\ref{examp:c-m-c-n}, one can create arbitrarily large classes of graph automorphism problem by based on the known Theorem ~\ref{lab:irrep-prod-grp} (not a result original to this paper) for which quantum Fourier transform will always fail. The proof of the Theorem ~\ref{lab:irrep-prod-grp} is given as Theorem $10$ in \cite{serre2012linear}

\begin{mytheo}
\label{lab:irrep-prod-grp}
The direct product of two irreducible representations of groups $H$ and $K$ yields an irreducible representation of the direct product group so that all irreducible representations of the direct product group can be generated from the irreducible representations of the original groups before they are joined.
\end{mytheo}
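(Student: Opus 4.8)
The plan is to prove the statement in three movements: constructing the candidate representations, verifying their irreducibility by a character computation, and showing that they exhaust all irreducible representations of $H \times K$.

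First I would fix an irreducible representation $\rho$ of $H$ and an irreducible representation $\sigma$ of $K$, of dimensions $d_\rho$ and $d_\sigma$, and define the outer tensor product $\rho \boxtimes \sigma$ on the product group by $(\rho \boxtimes \sigma)(h, k) = \rho(h) \otimes \sigma(k)$, a matrix of size $d_\rho d_\sigma$. That this is a homomorphism is immediate from the mixed-product property of the Kronecker product, since $(\rho(h_1) \otimes \sigma(k_1))(\rho(h_2) \otimes \sigma(k_2)) = \rho(h_1 h_2) \otimes \sigma(k_1 k_2)$, which matches the product $(h_1,k_1)(h_2,k_2) = (h_1 h_2, k_1 k_2)$ in $H \times K$, and the identity maps to the identity. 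The character then factors as $\chi_{\rho \boxtimes \sigma}(h, k) = \chi_\rho(h)\, \chi_\sigma(k)$, because the trace of a Kronecker product is the product of the traces.

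Next I would establish irreducibility via the standard character criterion that a representation is irreducible precisely when the inner product of its character with itself is one. Using Definition~\ref{def:inner-prod-char} and the factorization of the character, the sum separates across the two factors:
\begin{align}
\langle \chi_{\rho \boxtimes \sigma}, \chi_{\rho \boxtimes \sigma} \rangle_{H \times K}
&= \frac{1}{|H||K|} \sum_{h \in H} \sum_{k \in K} |\chi_\rho(h)|^2 |\chi_\sigma(k)|^2 \nonumber\\
&= \langle \chi_\rho, \chi_\rho \rangle_H \,\langle \chi_\sigma, \chi_\sigma \rangle_K = 1,
\end{align}
since $\rho$ and $\sigma$ are each irreducible. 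Hence every $\rho \boxtimes \sigma$ is an irreducible representation of $H \times K$. For completeness I would first note that these are pairwise inequivalent: if $\rho \boxtimes \sigma \cong \rho' \boxtimes \sigma'$, equating characters and restricting to the subgroups $H \times \{e\}$ and $\{e\} \times K$ forces $\chi_\rho = \chi_{\rho'}$ and $\chi_\sigma = \chi_{\sigma'}$, so the pairs genuinely index distinct irreducibles. It then suffices to count dimensions, using the sum-of-squares identity $\sum_\tau d_\tau^2 = |G|$ that follows from Theorem~\ref{theo:irrep-multiplicity-in-regular} applied to the regular representation:
\begin{align}
\sum_{\rho, \sigma} (d_\rho d_\sigma)^2 = \Bigl(\sum_\rho d_\rho^2\Bigr)\Bigl(\sum_\sigma d_\sigma^2\Bigr) = |H|\,|K| = |H \times K|.
\end{align}
Because the outer tensor products already account for the full order of $H \times K$, no additional irreducible representations can exist, and the list $\{\rho \boxtimes \sigma\}$ is exhaustive.

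The main obstacle I anticipate is the completeness step rather than the irreducibility check: the irreducibility reduces to the one-line inner-product computation above, but proving that the $\rho \boxtimes \sigma$ exhaust $\widehat{H \times K}$ requires both the pairwise-inequivalence argument and the global dimension count, and it leans on the orthogonality relations and the sum-of-squares theorem for finite groups. An alternative route to completeness would match the number of such pairs against the number of conjugacy classes of $H \times K$, using that a conjugacy class of the direct product is exactly a product of a class of $H$ with a class of $K$; I would present whichever argument sits most cleanly on the character machinery already developed in Section~\ref{sec:prem}.
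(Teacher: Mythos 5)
Your proof is correct, but note that the paper does not actually prove this statement: it explicitly defers to Theorem~10 of the cited reference \cite{serre2012linear}, and the argument you give (outer tensor product, character factorization, the $\langle\chi,\chi\rangle=1$ irreducibility criterion, and the sum-of-squares dimension count for exhaustiveness) is essentially that standard proof. The only step worth tightening is the pairwise-inequivalence claim, which follows most cleanly from $\langle \chi_{\rho\boxtimes\sigma}, \chi_{\rho'\boxtimes\sigma'}\rangle_{H\times K} = \langle\chi_\rho,\chi_{\rho'}\rangle_H\,\langle\chi_\sigma,\chi_{\sigma'}\rangle_K$ rather than from restriction alone.
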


This paper provides Algorithm ~\ref{algo:cycle-graph-infinite} which is a way to generate arbitrarily large class of graph automorphism problems for which quantum Fourier sampling is guaranteed to fail.

\begin{algorithm}[H]
\label{algo:arb-fail}
\caption{An algorithm to create arbitrarily large easy graph automorphism problem}
     \label{algo:cycle-graph-infinite}
\begin{algorithmic}[1]
\Procedure {DISJOINT-CYCLE-GRAPH}{$m, n$} \Comment{$m$ is the number of nodes for the smallest cycle, $n$ is the number of cycles}

\State Create an $m$-cycle graph \Comment{The first cycle}

\For{$i\gets 1, n-1 $}
\State Increase $m$ by one
\State Create an $m$-cycle graph \Comment{The next cycle}
\EndFor
\EndProcedure
\end{algorithmic}
\end{algorithm}

By the end of its execution, Algorithm ~\ref{algo:cycle-graph-infinite} will generate a graph of $n$ cycles with a total of $m n$ nodes and the time complexity will be $O\left(poly(m, n)\right)$. One can be more creative about the Step 4 of Algorithm ~\ref{algo:cycle-graph-infinite} to create other classes of graph automorphism problems for which quantum Fourier sampling is guaranteed to fail.

At this point following remark can be made.

\begin{myrem}
\label{rem:dn-prod-g-qft-fail}
Arbitrarily large classes of graph automorphism problems can be created for which quantum Fourier sampling is guaranteed to fail. Quantum Fourier sampling is guaranteed to fail to compute the automorphism group of a graph when the automorphism group is the product of a dihedral group and any finite group.
\end{myrem}

\subsection{Existential approach}
In the existential approach, a general class of graphs is chosen and it is proven that there is at least one graph in that class for which quantum hidden subgroup algorithm is guaranteed to fail to compute the automorphism group.

The discussion starts with the Frucht's theorem \cite{frucht1939herstellung}.

\begin{mytheo}
Every abstract group is isomorphic to the automorphism group of some graph.
\end{mytheo}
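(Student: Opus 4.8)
The plan is to prove the statement in full generality — for an arbitrary abstract group $G$, finite or infinite — via the Cayley color graph construction followed by a rigidification step that converts colors and orientations into plain graph structure. The overall strategy has two stages: first realise $G$ as the color-and-orientation-preserving automorphism group of a directed edge-colored graph, and then encode the colors and orientations by rigid, pairwise non-isomorphic gadgets so as to obtain an ordinary simple undirected graph with the same automorphism group. The phrase ``every abstract group'' forces me to treat the infinite case on the same footing as the finite one, so each stage must be stated without any finiteness hypothesis on $G$.

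First I would fix any generating set $S$ of $G$; to avoid imposing a hypothesis it is convenient to take $S = G \setminus \{e\}$. Form the Cayley color digraph $D$ whose vertex set is $G$ and which, for each $s \in S$, carries a directed edge of color $s$ from $g$ to $gs$. Left translations $L_h : g \mapsto hg$ are color- and orientation-preserving automorphisms of $D$, giving an embedding $G \hookrightarrow \mathrm{Aut}_{\mathrm{col}}(D)$. Conversely, if $\phi$ is any color- and orientation-preserving automorphism and $h := \phi(e)$, then the defining relation (an $s$-arc runs $g \to gs$) forces $\phi(gs) = \phi(g)\,s$; since $S$ generates $G$, an induction on word length in $S$ yields $\phi(g) = hg$ for all $g$. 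Hence $\mathrm{Aut}_{\mathrm{col}}(D) \cong G$, and this argument is valid verbatim for infinite $G$ since it never counts vertices.

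Next I would replace the colored arcs by gadgets to produce a simple undirected graph $\Gamma$. Each arc is first subdivided, and to its midpoint a rooted gadget $T_s$ is attached that simultaneously records the orientation of the arc and the color $s$. Orientation is recorded by making $T_s$ asymmetric with respect to the two endpoints of the subdivided arc, for example by attaching distinguishing pendants of different sizes on the tail side and the head side. Color is recorded by choosing the family $\{T_s\}_{s \in S}$ to consist of pairwise non-isomorphic \emph{rigid} rooted graphs, one per color, whose local invariants at the attachment vertices differ from those of the skeleton so that no gadget vertex can be mapped onto an original $G$-vertex. For finite or countable $S$ one may use pendant paths of distinct lengths hung from degree-distinguished anchors; for an arbitrary index set one invokes the existence of arbitrarily large families of pairwise non-isomorphic rigid graphs, which furnishes one distinct rigid gadget for every color in $S$ regardless of the cardinality of $G$.

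The main obstacle — and the only place the infinite case demands genuine care — is verifying $\mathrm{Aut}(\Gamma) \cong G$, that is, that the gadgets introduce no spurious automorphisms and that no automorphism swaps two distinct colors. The forward inclusion is immediate, since each $L_h$ acts on $\Gamma$ by permuting gadgets consistently. For the reverse inclusion I would show that any $\alpha \in \mathrm{Aut}(\Gamma)$ must preserve the set of skeleton vertices (they are distinguished from gadget vertices by built-in local invariants), must preserve each gadget's color (the $T_s$ are pairwise non-isomorphic and rigid, so $\alpha$ cannot carry a $T_s$-gadget to a $T_{s'}$-gadget with $s \ne s'$), and must preserve orientation (by the tail/head asymmetry). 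Consequently $\alpha$ restricts to a color- and orientation-preserving automorphism of $D$, which by the first stage is a left translation. The delicate points, all located in this step, are ensuring the rigidity and mutual non-isomorphism of the gadget family when $S$ is infinite and ruling out long-range ``accidental'' automorphisms mixing a gadget with the skeleton; both are handled by arranging the local invariants of gadget-interior vertices, attachment vertices, and skeleton vertices to be mutually disjoint, so that every automorphism is forced to respect this three-way partition of $V(\Gamma)$.
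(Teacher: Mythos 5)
Your construction is sound, but note that the paper itself offers no proof of this statement at all: it simply cites Frucht's 1939 paper and moves on, using the theorem as an imported black box in its ``existential approach.'' So there is no internal argument to compare against; what you have written is essentially the classical proof --- the Cayley color digraph with $\mathrm{Aut}_{\mathrm{col}}(D)\cong G$ via uniqueness of the out-arc of each color at each vertex, followed by Frucht-style rigid asymmetric gadgets replacing colored arcs --- and your two-stage outline is correct. Two remarks on scope and on where the real work sits. First, the citation in the paper (Frucht 1939) covers only \emph{finite} groups; the ``every abstract group'' phrasing you rightly take seriously is the de Groot--Sabidussi extension, and the single heaviest ingredient you invoke without proof --- the existence of arbitrarily large families of pairwise non-isomorphic rigid connected graphs --- is precisely the content of that extension, so an honest write-up should either prove it (e.g., via rigid trees) or cite it explicitly rather than treat it as folklore. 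Second, your final paragraph correctly identifies that the entire burden of the reverse inclusion is showing the skeleton/attachment/gadget-interior partition is $\mathrm{Aut}(\Gamma)$-invariant; with $S=G\setminus\{e\}$ this is easiest to discharge by a degree argument (skeleton vertices have degree $2\left|S\right|$, which for infinite $G$ exceeds any finite gadget degree, and for finite $G$ one bounds gadget degrees by $3$ and subdivides enough), and saying this explicitly would close the only hand-wavy step, since ``arranging local invariants to be mutually disjoint'' is the conclusion one must verify, not a device one may simply posit.
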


So, any group which is a product of $D_n$ and a finite group $G$ is isomorphic to the automorphism group of some finite graph. It has also been shown in \cite{babai1996automorphism} that  every finite group as the group of symmetries of a strongly regular graph. It indicates that there is a class of strongly regular graph whose automorphism group is isomorphic to $D_n \times G$. According to the Theorem ~\ref{lab:irrep-prod-grp} and Remark ~\ref{rem:dn-prod-g-qft-fail}, one can argue that  quantum Fourier sampling should fail to construct the automorphism group of a subclass of strongly regular graphs.

Another example may be the  Cayley graph automorphism problem \cite{xu1998automorphism}. It is well known that the automorphism group of the Cayley graph $\text{Aut}\left( C \left(G, X\right)\right)$ of a group $G$ over a generating set $X$ contain an isomorphic copy of $G$ acting via left translations \cite{jajcay2000structure}. In that case, the automorphism group of the Cayley graph of a dihedral group $D_n$  contains $D_n$ as a subgroup. So, following the result of the previous section, quantum Fourier sampling fails to compute the automorphism group of $\text{Aut}\left( C \left(G, X\right)\right)$.

\subsection{Universal structures approach}
A class $\mathcal{C}$ of structures  is called {\it universal} if every finite group is the automorphism group of a structure in $\mathcal{C}$  \cite{cameron2004automorphisms}. A series of works by Frucht, Sabidussi, Mendelsohn, Babai, Kantor, and others \cite{cameron2004automorphisms} has shown the following classes of graphs to be universal - graphs of valency $k$ for any fixed $k > 2$ \cite{frucht1949graphs}; bipartite graphs; strongly regular graphs \cite{mendelsohn1978every}; Hamiltonian graphs \cite{sabidussi1957graphs}; $k$-connected graphs \cite{sabidussi1957graphs}, for $k > 0$; $k$-chromatic graphs, for $k > 1$; switching classes of graphs; lattices \cite{birkhoff1946grupos}; projective planes (possibly infinite); and Steiner triple systems \cite{mendelsohn1978groups}; and symmetric designs (BIBDs). It indicates that each of these classes has at least one graph which has its automorphism group isomorphic to $D_n \times G$ where $G$ is any finite group. So, quantum Fourier sampling will fail to compute the automorphism group of each of these cases.

\section{Is hidden subgroup the ideal approach?}
It has been shown that there are instances of graph automorphism problem for which hidden subgroup algorithm can never be successful although they are trivial to solve on a classical computer. So, it can be argued that the space of the hidden subgroup representations of all graph automorphism problems cannot capture the structure of the space of all graph automorphism problem. As the graph isomorphism problem is believed to be at least as hard as the graph automorphism problem, it can also be added that the space of the hidden subgroup representations of all graph isomorphism problem cannot capture the structure of the space of all graph isomorphism problem. So, it would be appropriate to investigate alternative quantum algorithmic approach for these classes of problems.

\section{Conclusion}
It has been shown that, while solving the hidden subgroup representation of the graph isomorphism problem is equivalent to determining order $2$ subgroup of a symmetric group,  the hidden subgroup representation of the graph automorphism problem is equivalent to determining a hidden subgroup of higher order. This paper has identified a class of graph automorphism problem for which the quantum Fourier transform algorithm always fails. It also has shown how one can determine non-trivial classes of graphs for which the same algorithm always fails. With these negative results, one may be interested to ask whether the hidden subgroup representation is a practical representation of the graph isomorphism and automorphism problems in quantum regime.

\section*{Acknowledgement}
OS thanks Dave Bacon, Aram Harrow, Robert Campbell, Marc Bogaerts, Andrew Childs,  Steven Gregory, Jef Laga, Dietrich Burde, Eric Wofsey, Alexander Hulpke, Michael Burr, Jyrki Lahtonen, Joshua Grochow and  Tobias Kildetoft for their helpful comments.

\newpage
\bibliographystyle{plain}
\bibliography{disst}

\newpage
\appendix
\section{Irreducible representations of $\text{Aut} \left(C_m \sqcup C_n\right)$}
\label{app:cmcn}
\begin{enumerate}
  \item $\rho_1 = \rho_{1,1, m} \otimes \rho_{1,1, n} = 1 \times 1  = 1$. So, the character is $\chi_1 = 1$.
  \item $\rho_2 = \rho_{2,1, m} \otimes \rho_{1,1, n} = 1 \times 1  \text{ or } -1 \times 1  = \pm 1$. So, the character is  $\chi_2 = \pm 1$.
  \item $\rho_3 = \rho_{3,1, m} \otimes \rho_{1,1, n} = 1 \times 1  \text{ or } -1 \times 1  = \pm 1  $. So, the character is   $\chi_3 = \pm 1$.
  \item $\rho_4 = \rho_{4,1, m} \otimes \rho_{1,1, n} = 1 \times 1  \text{ or } -1 \times 1  = \pm 1  $. So, the character is   $\chi_4 = \pm 1$.
  \item $\rho_5 = \rho_{1,2, m} \otimes \rho_{1,1, n} = \begin{pmatrix}
e^{\frac{2 \pi i k_m}{m}}&0\\
0&e^{-\frac{2 \pi i k_m}{m}}
\end{pmatrix} \times 1  = \begin{pmatrix}
e^{\frac{2 \pi i k_m}{m}}&0\\
0&e^{-\frac{2 \pi i k_m}{m}}
\end{pmatrix}$. So, the character is $\chi_5 =e^{\frac{2 \pi i k_m}{m}} + e^{-\frac{2 \pi i k_m}{m}}$ $= 2 \cos \left(\frac{2 \pi k_m}{m}\right)$.
  \item $\rho_6 = \rho_{2,2, m} \otimes \rho_{1,1, n} = \begin{pmatrix}
e^{\frac{2 \pi i k_m l_m}{m}}&0\\
0&e^{-\frac{2 \pi i k_m l_m}{m}}
\end{pmatrix} \times 1 = \begin{pmatrix}
e^{\frac{2 \pi i k_m l_m}{m}}&0\\
0&e^{-\frac{2 \pi i k_m l_m}{m}}
\end{pmatrix}$. So, the character is $\chi_6 = e^{\frac{2 \pi i k_m l_m}{m}} + e^{-\frac{2 \pi i k_m l_m}{m}}$ $= 2 \cos \left(\frac{2 \pi k_m l_m}{m}\right)$.
  \item $\rho_7 = \rho_{3,2, m} \otimes \rho_{1,1, n} = \begin{pmatrix}
0&1\\
1&0
\end{pmatrix} \times 1 = \begin{pmatrix}
0&1\\
1&0
\end{pmatrix}$. So, the character is $\chi_7 = 0$.
  \item $\rho_8 = \rho_{4,2, m} \otimes \rho_{1,1, n} = \begin{pmatrix}
0&e^{\frac{2 \pi i k_m l_m}{m}}\\
e^{-\frac{2 \pi i k_m l_m}{m}}&0
\end{pmatrix} \times 1 = \begin{pmatrix}
0&e^{\frac{2 \pi i k_m l_m}{m}}\\
e^{-\frac{2 \pi i k_m l_m}{m}}&0
\end{pmatrix}$. So, the character is $\chi_8 = 0$.
  \item $\rho_9 = \rho_{1,1, m} \otimes \rho_{2,1, n} = 1 \times 1 \text{ or } 1 \times -1  = \pm 1$. So, the character is $\chi_9 = \pm 1$.
  \item $\rho_{10} = \rho_{2,1, m} \otimes \rho_{2,1, n} = 1 \times 1 \text{ or } -1 \times1 \text{ or } 1 \times -1 \text{ or } -1 \times -1   = \pm 1$. So, the character is $\chi_{10} = \pm 1$.
  \item $\rho_{11} = \rho_{3,1, m} \otimes \rho_{2,1, n} = 1 \times 1 \text{ or } -1 \times1 \text{ or } 1 \times -1 \text{ or } -1 \times -1   = \pm 1$. So, the character is $\chi_{11} = \pm 1$.
  \item $\rho_{12} = \rho_{4,1, m} \otimes \rho_{2,1, n} = 1 \times 1 \text{ or } -1 \times1 \text{ or } 1 \times -1 \text{ or } -1 \times -1   = \pm 1$. So, the character is $\chi_{12} = \pm 1$.
  \item $\rho_{13} = \rho_{1,2, m} \otimes \rho_{2,1, n} = \begin{pmatrix}
e^{\frac{2 \pi i k_m}{m}}&0\\
0&e^{-\frac{2 \pi i k_m}{m}}
\end{pmatrix} \times 1 \text{ or } \begin{pmatrix}
e^{\frac{2 \pi i k_m}{m}}&0\\
0&e^{-\frac{2 \pi i k_m}{m}}
\end{pmatrix} \times -1$ \\$= \pm \begin{pmatrix}
e^{\frac{2 \pi i k_m}{m}}&0\\
0&e^{-\frac{2 \pi i k_m}{m}}
\end{pmatrix}$. So, the character is $\chi_{13} =\pm \left( e^{\frac{2 \pi i k_m}{m}} + e^{-\frac{2 \pi i k_m}{m}}\right)$\\ $= \pm 2 \cos \left(\frac{2 \pi k_m}{m}\right)$.
  \item $\rho_{14} = \rho_{2,2, m} \otimes \rho_{2,1, n} = \begin{pmatrix}
e^{\frac{2 \pi i k_m l_m}{m}}&0\\
0&e^{-\frac{2 \pi i k_m l_m}{m}}
\end{pmatrix} \times 1 \text{ or } \begin{pmatrix}
e^{\frac{2 \pi i k_m l_m}{m}}&0\\
0&e^{-\frac{2 \pi i k_m l_m}{m}}
\end{pmatrix} \times -1$\\ $= \pm \begin{pmatrix}
e^{\frac{2 \pi i k_m l_m}{m}}&0\\
0&e^{-\frac{2 \pi i k_m l_m}{m}}
\end{pmatrix}$. So, the character is $\chi_{14} =\pm \left(e^{\frac{2 \pi i k_m l_m}{m}} + e^{-\frac{2 \pi i k_m l_m}{m}} \right)$\\ $ = \pm  2 \cos \left(\frac{2 \pi k_m l_m}{m}\right)$.
  \item $\rho_{15} = \rho_{3,2, m} \otimes \rho_{2,1, n} = \begin{pmatrix}
0&1\\
1&0
\end{pmatrix} \times 1 \text{ or } \begin{pmatrix}
0&1\\
1&0
\end{pmatrix} \times -1$  $ = \pm \begin{pmatrix}
0&1\\
1&0
\end{pmatrix}$. So, the character is $\chi_{15} =0$.
  \item $\rho_{16} = \rho_{4,2, m} \otimes \rho_{2,1, n} = \begin{pmatrix}
0&e^{\frac{2 \pi i k_m l_m}{m}}\\
e^{-\frac{2 \pi i k_m l_m}{m}}&0
\end{pmatrix} \times 1 \text{ or } \begin{pmatrix}
0&e^{\frac{2 \pi i k_m l_m}{m}}\\
e^{-\frac{2 \pi i k_m l_m}{m}}&0
\end{pmatrix} \times -1$ \\   $ = \pm \begin{pmatrix}
0&e^{\frac{2 \pi i k_m l_m}{m}}\\
e^{-\frac{2 \pi i k_m l_m}{m}}&0
\end{pmatrix}$. So, the character is $\chi_{16} =0$.
  \item $\rho_{17} = \rho_{1,1, m} \otimes \rho_{3,1, n} = 1 \times 1 \text{ or } 1 \times -1 = \pm 1$. So, the character is $\chi_{17} =\pm 1$.
  \item $\rho_{18} = \rho_{2,1, m} \otimes \rho_{3,1, n} = 1 \times 1 \text{ or } -1 \times1 \text{ or } 1 \times -1 \text{ or } -1 \times -1  = \pm 1$. So, the character is $\chi_{18} =\pm 1$.
  \item $\rho_{19} = \rho_{3,1, m} \otimes \rho_{3,1, n} = 1 \times 1 \text{ or } -1 \times1 \text{ or } 1 \times -1 \text{ or } -1 \times -1  = \pm 1$. So, the character is $\chi_{19} =\pm 1$.
  \item $\rho_{20} = \rho_{4,1, m} \otimes \rho_{3,1, n} = 1 \times 1 \text{ or } -1 \times1 \text{ or } 1 \times -1 \text{ or } -1 \times -1  = \pm 1$. So, the character is $\chi_{20} =\pm 1$.
  \item $\rho_{21} = \rho_{1,2, m} \otimes \rho_{3,1, n} = \begin{pmatrix}
e^{\frac{2 \pi i k_m}{m}}&0\\
0&e^{-\frac{2 \pi i k_m}{m}}
\end{pmatrix} \times 1 \text{ or } \begin{pmatrix}
e^{\frac{2 \pi i k_m}{m}}&0\\
0&e^{-\frac{2 \pi i k_m}{m}}
\end{pmatrix} \times -1$ $ = \pm \begin{pmatrix}
e^{\frac{2 \pi i k_m}{m}}&0\\
0&e^{-\frac{2 \pi i k_m}{m}}
\end{pmatrix}$. So, the character is $\chi_{21} =\pm \left(e^{\frac{2 \pi i k_m}{m}} + e^{-\frac{2 \pi i k_m}{m}} \right)$ $= \pm 2 \cos \left(\frac{2 \pi k_m}{m}\right)$.
  \item $\rho_{22} = \rho_{2,2, m} \otimes \rho_{3,1, n} = \begin{pmatrix}
e^{\frac{2 \pi i k_m l_m}{m}}&0\\
0&e^{-\frac{2 \pi i k_m l_m}{m}}
\end{pmatrix} \times 1 \text{ or } \begin{pmatrix}
e^{\frac{2 \pi i k_m l_m}{m}}&0\\
0&e^{-\frac{2 \pi i k_m l_m}{m}}
\end{pmatrix} \times -1$\\ $ = \pm \begin{pmatrix}
e^{\frac{2 \pi i k_m l_m}{m}}&0\\
0&e^{-\frac{2 \pi i k_m l_m}{m}}
\end{pmatrix}$. So, the character is $ \chi_{22} = \pm \left(e^{\frac{2 \pi i k_m l_m}{m}} + e^{-\frac{2 \pi i k_m l_m}{m}} \right)$ $= 2 \cos \left(\frac{2 \pi k_m l_m}{m}\right)$.
  \item $\rho_{23} = \rho_{3,2, m} \otimes \rho_{3,1, n} = \begin{pmatrix}
0&1\\
1&0
\end{pmatrix} \times 1 \text{ or } \begin{pmatrix}
0&1\\
1&0
\end{pmatrix} \times -1$  $ = \pm \begin{pmatrix}
0&1\\
1&0
\end{pmatrix}$. So, the character is $\chi_{23} =0$.
  \item $\rho_{24} = \rho_{4,2, m} \otimes \rho_{3,1, n} = \begin{pmatrix}
0&e^{\frac{2 \pi i k_m l_m}{m}}\\
e^{-\frac{2 \pi i k_m l_m}{m}}&0
\end{pmatrix} \times 1 \text{ or } \begin{pmatrix}
0&e^{\frac{2 \pi i k_m l_m}{m}}\\
e^{-\frac{2 \pi i k_m l_m}{m}}&0
\end{pmatrix} \times -1$ \\     $ = \pm \begin{pmatrix}
0&e^{\frac{2 \pi i k_m l_m}{m}}\\
e^{-\frac{2 \pi i k_m l_m}{m}}&0
\end{pmatrix}$. So, the character is $\chi_{24} =0$.
  \item $\rho_{25} = \rho_{1,1, m} \otimes \rho_{4,1, n} = 1 \times 1 \text{ or } 1 \times -1 = \pm 1$. So, the character is $\chi_{25} =\pm 1$.
  \item $\rho_{26} = \rho_{2,1, m} \otimes \rho_{4,1, n} = 1 \times 1 \text{ or } -1 \times1 \text{ or } 1 \times -1 \text{ or } -1 \times -1  = \pm 1$. So, the character is  $\chi_{26} =\pm 1$.
  \item $\rho_{27} = \rho_{3,1, m} \otimes \rho_{4,1, n} = 1 \times 1 \text{ or } -1 \times1 \text{ or } 1 \times -1 \text{ or } -1 \times -1  = \pm 1$. So, the character is  $\chi_{27} =\pm 1$.
  \item $\rho_{28} = \rho_{4,1, m} \otimes \rho_{4,1, n} = 1 \times 1 \text{ or } -1 \times1 \text{ or } 1 \times -1 \text{ or } -1 \times -1  = \pm 1$. So, the character is  $\chi_{28} =\pm 1$.
  \item $\rho_{29} = \rho_{1,2, m} \otimes \rho_{4,1, n} = \begin{pmatrix}
e^{\frac{2 \pi i k_m}{m}}&0\\
0&e^{-\frac{2 \pi i k_m}{m}}
\end{pmatrix} \times 1 \text{ or } \begin{pmatrix}
e^{\frac{2 \pi i k_m}{m}}&0\\
0&e^{-\frac{2 \pi i k_m}{m}}
\end{pmatrix} \times -1$ $ = \pm \begin{pmatrix}
e^{\frac{2 \pi i k_m}{m}}&0\\
0&e^{-\frac{2 \pi i k_m}{m}}
\end{pmatrix}$. So, the character is $\chi_{29} =\pm \left(e^{\frac{2 \pi i k_m}{m}} + e^{-\frac{2 \pi i k_m}{m}} \right)$ $= \pm 2 \cos \left(\frac{2 \pi k_m}{m}\right)$.
  \item $\rho_{30} = \rho_{2,2, m} \otimes \rho_{4,1, n} = \begin{pmatrix}
e^{\frac{2 \pi i k_m l_m}{m}}&0\\
0&e^{-\frac{2 \pi i k_m l_m}{m}}
\end{pmatrix} \times 1 \text{ or } \begin{pmatrix}
e^{\frac{2 \pi i k_m l_m}{m}}&0\\
0&e^{-\frac{2 \pi i k_m l_m}{m}}
\end{pmatrix} \times -1$\\ $ = \pm \begin{pmatrix}
e^{\frac{2 \pi i k_m l_m}{m}}&0\\
0&e^{-\frac{2 \pi i k_m l_m}{m}}
\end{pmatrix}$. So, the character is $\chi_{30} =\pm \left(e^{\frac{2 \pi i k_m l_m}{m}} + e^{-\frac{2 \pi i k_m l_m}{m}} \right)$\\ $=\pm 2 \cos \left(\frac{2\pi k_m l_m}{m}\right)$.
  \item $\rho_{31} = \rho_{3,2, m} \otimes \rho_{4,1, n} = \begin{pmatrix}
0&1\\
1&0
\end{pmatrix} \times 1 \text{ or } \begin{pmatrix}
0&1\\
1&0
\end{pmatrix} \times -1$  $ = \pm \begin{pmatrix}
0&1\\
1&0
\end{pmatrix}$. So, the character is $\chi_{31} =0$.
  \item $\rho_{32} = \rho_{4,2, m} \otimes \rho_{4,1, n} = \begin{pmatrix}
0&e^{\frac{2 \pi i k_m l_m}{m}}\\
e^{-\frac{2 \pi i k_m l_m}{m}}&0
\end{pmatrix} \times 1 \text{ or } \begin{pmatrix}
0&e^{\frac{2 \pi i k_m l_m}{m}}\\
e^{-\frac{2 \pi i k_m l_m}{m}}&0
\end{pmatrix} \times -1$  \\ $ = \pm \begin{pmatrix}
0&e^{\frac{2 \pi i k_m l_m}{m}}\\
e^{-\frac{2 \pi i k_m l_m}{m}}&0
\end{pmatrix}$. So, the character is $\chi_{32} =0$.     
  \item $\rho_{33} = \rho_{1,1, m} \otimes \rho_{1,2, n} = 1 \times \begin{pmatrix}
e^{\frac{2 \pi i k_n}{n}}&0\\
0&e^{-\frac{2 \pi i k_n}{n}}
\end{pmatrix} = \begin{pmatrix}
e^{\frac{2 \pi i k_n}{n}}&0\\
0&e^{-\frac{2 \pi i k_n}{n}}
\end{pmatrix}$.\\ So, the character is $\chi_{33} =\left(e^{\frac{2 \pi i k_n}{n}} + e^{-\frac{2 \pi i k_n}{n}} \right)$ $= 2 \cos \left(\frac{2 \pi k_n}{n}\right)$.
  \item $\rho_{34} = \rho_{2,1, m} \otimes \rho_{1,2, n} = 1 \times \begin{pmatrix}
e^{\frac{2 \pi i k_n}{n}}&0\\
0&e^{-\frac{2 \pi i k_n}{n}}
\end{pmatrix}  \text{ or } -1 \times  \begin{pmatrix}
e^{\frac{2 \pi i k_n}{n}}&0\\
0&e^{-\frac{2 \pi i k_n}{n}}
\end{pmatrix}$ $ = \pm \begin{pmatrix}
e^{\frac{2 \pi i k_n}{n}}&0\\
0&e^{-\frac{2 \pi i k_n}{n}}
\end{pmatrix}$. So, the character is $\chi_{34} =\pm \left(e^{\frac{2 \pi i k_n}{n}} + e^{-\frac{2 \pi i k_n}{n}} \right)$ $=\pm 2 \cos \left(\frac{2 \pi k_n}{n}\right)$.
  \item $\rho_{35} = \rho_{3,1, m} \otimes \rho_{1,2, n} = 1 \times \begin{pmatrix}
e^{\frac{2 \pi i k_n}{n}}&0\\
0&e^{-\frac{2 \pi i k_n}{n}}
\end{pmatrix}  \text{ or } -1 \times  \begin{pmatrix}
e^{\frac{2 \pi i k_n}{n}}&0\\
0&e^{-\frac{2 \pi i k_n}{n}}
\end{pmatrix}$ $ = \pm \begin{pmatrix}
e^{\frac{2 \pi i k_n}{n}}&0\\
0&e^{-\frac{2 \pi i k_n}{n}}
\end{pmatrix}$. So, the character is $\chi_{35} =\pm \left(e^{\frac{2 \pi i k_n}{n}} + e^{-\frac{2 \pi i k_n}{n}} \right)$ $=\pm 2 \cos \left(\frac{2 \pi k_n}{n}\right)$.
  \item $\rho_{36} = \rho_{4,1, m} \otimes \rho_{1,2, n} = 1 \times \begin{pmatrix}
e^{\frac{2 \pi i k_n}{n}}&0\\
0&e^{-\frac{2 \pi i k_n}{n}}
\end{pmatrix}  \text{ or } -1 \times  \begin{pmatrix}
e^{\frac{2 \pi i k_n}{n}}&0\\
0&e^{-\frac{2 \pi i k_n}{n}}
\end{pmatrix}$ $ = \pm \begin{pmatrix}
e^{\frac{2 \pi i k_n}{n}}&0\\
0&e^{-\frac{2 \pi i k_n}{n}}
\end{pmatrix}$. So, the character is $\chi_{36} =\pm \left(e^{\frac{2 \pi i k_n}{n}} + e^{-\frac{2 \pi i k_n}{n}} \right)$ $=\pm 2 \cos \left(\frac{2 \pi k_n}{n}\right)$.
  \item $\rho_{37} = \rho_{1,2, m} \otimes \rho_{1,2, n} = \begin{pmatrix}
e^{\frac{2 \pi i k_m}{m}}&0\\
0&e^{-\frac{2 \pi i k_m}{m}}
\end{pmatrix} \otimes \begin{pmatrix}
e^{\frac{2 \pi i k_n}{n}}&0\\
0&e^{-\frac{2 \pi i k_n}{n}}
\end{pmatrix}$ \\$ = \left(
\begin{array}{cccc}
 e^{\frac{2 i \pi  k_m}{m}+\frac{2 i k_n \pi }{n}} & 0 & 0 & 0 \\
 0 & e^{\frac{2 i k_m \pi }{m}-\frac{2 i k_n \pi }{n}} & 0 & 0 \\
 0 & 0 & e^{\frac{2 i k_n \pi }{n}-\frac{2 i k_m \pi }{m}} & 0 \\
 0 & 0 & 0 & e^{-\frac{2 i \pi  k_m}{m}-\frac{2 i k_n \pi }{n}} \\
\end{array}
\right)$.\\ So, the character is $\chi_{37} = 4 \cos \left(\frac{2 \pi  k_m}{m}\right) \cos \left(\frac{2 \pi  k_n}{n}\right)$.
  \item $\rho_{38} = \rho_{2,2, m} \otimes \rho_{1,2, n} = \begin{pmatrix}
e^{\frac{2 \pi i k_m l_m}{m}}&0\\
0&e^{-\frac{2 \pi i k_m l_m}{m}}
\end{pmatrix} \otimes \begin{pmatrix}
e^{\frac{2 \pi i k_n}{n}}&0\\
0&e^{-\frac{2 \pi i k_n}{n}}
\end{pmatrix}$ \\ $= \left(
\begin{array}{cccc}
 e^{\frac{2 i \pi  k_n}{n}+\frac{2 i k_m l_m \pi }{m}} & 0 & 0 & 0 \\
 0 & e^{\frac{2 i k_m l_m \pi }{m}-\frac{2 i k_n \pi }{n}} & 0 & 0 \\
 0 & 0 & e^{\frac{2 i k_n \pi }{n}-\frac{2 i k_m l_m \pi }{m}} & 0 \\
 0 & 0 & 0 & e^{-\frac{2 i \pi  k_n}{n}-\frac{2 i k_m l_m \pi }{m}} \\
\end{array}
\right)$.\\ So, the character is $\chi_{38} = 4 \cos \left(\frac{2 \pi  k_n}{n}\right) \cos \left(\frac{2 \pi  k_m l_m}{m}\right)$.
  \item $\rho_{39} = \rho_{3,2, m} \otimes \rho_{1,2, n} = \begin{pmatrix}
0&1\\
1&0
\end{pmatrix} \otimes \begin{pmatrix}
e^{\frac{2 \pi i k_n}{n}}&0\\
0&e^{-\frac{2 \pi i k_n}{n}}
\end{pmatrix}$ \\ $ = \left(
\begin{array}{cccc}
 0 & 0 & e^{\frac{2 i k_n \pi }{n}} & 0 \\
 0 & 0 & 0 & e^{-2 \frac{1}{n} i k_n \pi } \\
 e^{\frac{2 i k_n \pi }{n}} & 0 & 0 & 0 \\
 0 & e^{- 2 \frac{1}{n} i k_n \pi } & 0 & 0 \\
\end{array}
\right)$. So, the character is $\chi_{39} =0$.
  \item $\rho_{40} = \rho_{4,2, m} \otimes \rho_{1,2, n} = \begin{pmatrix}
0&e^{\frac{2 \pi i k_m l_m}{m}}\\
e^{-\frac{2 \pi i k_m l_m}{m}}&0
\end{pmatrix} \otimes \begin{pmatrix}
e^{\frac{2 \pi i k_n}{n}}&0\\
0&e^{-\frac{2 \pi i k_n}{n}}
\end{pmatrix}$  \\ $ = \left(
\begin{array}{cccc}
 0 & 0 & e^{\frac{2 i \pi  k_n}{n}+\frac{2 i k_m l_m \pi }{m}} & 0 \\
 0 & 0 & 0 & e^{\frac{2 i k_m l_m \pi }{m}-\frac{2 i k_n \pi }{n}} \\
 e^{\frac{2 i k_n \pi }{n}-\frac{2 i k_m lm \pi }{m}} & 0 & 0 & 0 \\
 0 & e^{-\frac{2 i \pi  k_n}{n}-\frac{2 i k_m l_m \pi }{m}} & 0 & 0 \\
\end{array}
\right)$. So, the character is $\chi_{40} =0$.
  \item $\rho_{41} = \rho_{1,1, m} \otimes \rho_{2,2, n} = 1 \times \begin{pmatrix}
e^{\frac{2 \pi i k_n l_n}{n}}&0\\
0&e^{-\frac{2 \pi i k_n l_n}{n}}
\end{pmatrix} = \begin{pmatrix}
e^{\frac{2 \pi i k_n l_n}{n}}&0\\
0&e^{-\frac{2 \pi i k_n l_n}{n}}
\end{pmatrix}$.\\ So, the character is $\chi_{41} =\left(e^{\frac{2 \pi i k_n l_n}{n}} + e^{-\frac{2 \pi i k_n l_n}{n}} \right)$ $= 2 \cos \left(\frac{2 \pi k_n l_n}{n}\right)$.
  \item $\rho_{42} = \rho_{2,1, m} \otimes \rho_{2,2, n} = 1 \times \begin{pmatrix}
e^{\frac{2 \pi i k_n l_n}{n}}&0\\
0&e^{-\frac{2 \pi i k_n l_n}{n}}
\end{pmatrix} \text{ or } -1 \times \begin{pmatrix}
e^{\frac{2 \pi i k_n l_n}{n}}&0\\
0&e^{-\frac{2 \pi i k_n l_n}{n}}
\end{pmatrix} $\\ $ = \pm \begin{pmatrix}
e^{\frac{2 \pi i k_n l_n}{n}}&0\\
0&e^{-\frac{2 \pi i k_n l_n}{n}}
\end{pmatrix}$. So, the character is $\chi_{42} =\pm \left(e^{\frac{2 \pi i k_n l_n}{n}} + e^{-\frac{2 \pi i k_n l_n}{n}} \right)$\\ $=\pm 2 \cos \left(\frac{2 \pi k_n l_n}{n}\right)$.
  \item $\rho_{43} = \rho_{3,1, m} \otimes \rho_{2,2, n} = 1 \times \begin{pmatrix}
e^{\frac{2 \pi i k_n l_n}{n}}&0\\
0&e^{-\frac{2 \pi i k_n l_n}{n}}
\end{pmatrix} \text{ or } -1 \times \begin{pmatrix}
e^{\frac{2 \pi i k_n l_n}{n}}&0\\
0&e^{-\frac{2 \pi i k_n l_n}{n}}
\end{pmatrix} $\\ $ = \pm \begin{pmatrix}
e^{\frac{2 \pi i k_n l_n}{n}}&0\\
0&e^{-\frac{2 \pi i k_n l_n}{n}}
\end{pmatrix}$. So, the character is $\chi_{43} =\pm \left(e^{\frac{2 \pi i k_n l_n}{n}} + e^{-\frac{2 \pi i k_n l_n}{n}} \right)$ \\$=\pm 2 \cos \left(\frac{2 \pi k_n l_n}{n}\right)$.
  \item $\rho_{44} = \rho_{4,1, m} \otimes \rho_{2,2, n} = 1 \times \begin{pmatrix}
e^{\frac{2 \pi i k_n l_n}{n}}&0\\
0&e^{-\frac{2 \pi i k_n l_n}{n}}
\end{pmatrix} \text{ or } -1 \times \begin{pmatrix}
e^{\frac{2 \pi i k_n l_n}{n}}&0\\
0&e^{-\frac{2 \pi i k_n l_n}{n}}
\end{pmatrix} $\\ $ = \pm \begin{pmatrix}
e^{\frac{2 \pi i k_n l_n}{n}}&0\\
0&e^{-\frac{2 \pi i k_n l_n}{n}}
\end{pmatrix}$. So, the character is $\chi_{44} =\pm \left(e^{\frac{2 \pi i k_n l_n}{n}} +e^{-\frac{2 \pi i k_n l_n}{n}} \right)$\\ $=\pm 2 \cos \left(\frac{2 \pi k_n l_n}{n}\right)$.
  \item $\rho_{45} = \rho_{1,2, m} \otimes \rho_{2,2, n} = \begin{pmatrix}
e^{\frac{2 \pi i k_m}{m}}&0\\
0&e^{-\frac{2 \pi i k_m}{m}}
\end{pmatrix} \otimes \begin{pmatrix}
e^{\frac{2 \pi i k_n l_n}{n}}&0\\
0&e^{-\frac{2 \pi i k_n l_n}{n}}
\end{pmatrix}$ \\ $ = \left(
\begin{array}{cccc}
 e^{\frac{2 i \pi  k_m}{m}+\frac{2 i k_n  l_n  \pi }{n}} & 0 & 0 & 0 \\
 0 & e^{\frac{2 i k_m \pi }{m}-\frac{2 i k_n  l_n  \pi }{n}} & 0 & 0 \\
 0 & 0 & e^{\frac{2 i k_n  l_n  \pi }{n}-\frac{2 i k_m \pi }{m}} & 0 \\
 0 & 0 & 0 & e^{-\frac{2 i \pi  k_m}{m}-\frac{2 i k_n  l_n  \pi }{n}} \\
\end{array}
\right)$.\\ So, the character is $\chi_{45} = 4 \cos \left(\frac{2 \pi  k_m}{m}\right) \cos \left(\frac{2 \pi  k_n l_n }{n}\right)$.
  \item $\rho_{46} = \rho_{2,2, m} \otimes \rho_{2,2, n} = \begin{pmatrix}
e^{\frac{2 \pi i k_m l_m}{m}}&0\\
0&e^{-\frac{2 \pi i k_m l_m}{m}}
\end{pmatrix} \otimes \begin{pmatrix}
e^{\frac{2 \pi i k_n l_n}{n}}&0\\
0&e^{-\frac{2 \pi i k_n l_n}{n}}
\end{pmatrix}$ \\ $ = \left(
\begin{array}{cccc}
 e^{\frac{2 i k_m \pi  l_m}{m}+\frac{2 i k_n l_n  \pi }{n}} & 0 & 0 & 0 \\
 0 & e^{\frac{2 i k_m l_m \pi }{m}-\frac{2 i k_n l_n  \pi }{n}} & 0 & 0 \\
 0 & 0 & e^{\frac{2 i k_n  l_n  \pi }{n}-\frac{2 i k_m l_m \pi }{m}} & 0 \\
 0 & 0 & 0 & e^{-\frac{2 i k_m \pi  l_m}{m}-\frac{2 i k_n  l_n  \pi }{n}} \\
\end{array}
\right)$.\\ So, the character is $\chi_{46} = 4 \cos \left(\frac{2 \pi  k_m l_m}{m}\right) \cos \left(\frac{2 \pi  k_n l_n }{n}\right)$. 
  \item $\rho_{47} = \rho_{3,2, m} \otimes \rho_{2,2, n} = \begin{pmatrix}
0&1\\
1&0
\end{pmatrix} \otimes \begin{pmatrix}
e^{\frac{2 \pi i k_n l_n}{n}}&0\\
0&e^{-\frac{2 \pi i k_n l_n}{n}}
\end{pmatrix}$  \\ $ = \left(
\begin{array}{cccc}
 0 & 0 & e^{\frac{1}{n} 2 i k_n l_n \pi } & 0 \\
 0 & 0 & 0 & e^{-\frac{1}{n} 2 i k_n l_n \pi } \\
 e^{\frac{1}{n} 2 i k_n l_n \pi } & 0 & 0 & 0 \\
 0 & e^{-\frac{1}{n} 2 i k_n l_n \pi } & 0 & 0 \\
\end{array}
\right)$. So, the character is $\chi_{47} =0$.
  \item $\rho_{48} = \rho_{4,2, m} \otimes \rho_{2,2, n} = \begin{pmatrix}
0&e^{\frac{2 \pi i k_m l_m}{m}}\\
e^{-\frac{2 \pi i k_m l_m}{m}}&0
\end{pmatrix} \otimes \begin{pmatrix}
e^{\frac{2 \pi i k_n l_n}{n}}&0\\
0&e^{-\frac{2 \pi i k_n l_n}{n}}
\end{pmatrix}$    \\ $ = \left(
\begin{array}{cccc}
 0 & 0 & e^{\frac{2 i k_m \pi  l_m}{m}+\frac{2 i k_n  l_n  \pi }{n}} & 0 \\
 0 & 0 & 0 & e^{\frac{2 i k_m l_m \pi }{m}-\frac{2 i k_n  l_n  \pi }{n}} \\
 e^{\frac{2 i k_n  l_n  \pi }{n}-\frac{2 i k_m l_m \pi }{m}} & 0 & 0 & 0 \\
 0 & e^{-\frac{2 i k_m \pi  l_m}{m}-\frac{2 i k_n  l_n  \pi }{n}} & 0 & 0 \\
\end{array}
\right)$. So, the character is $\chi_{48} =0$.
  \item $\rho_{49} = \rho_{1,1, m} \otimes \rho_{3,2, n} = 1 \times \begin{pmatrix}
0&1\\
1&0
\end{pmatrix} = \begin{pmatrix}
0&1\\
1&0
\end{pmatrix}$. So, the character is $\chi_{49} =0$.
  \item $\rho_{50} = \rho_{2,1, m} \otimes \rho_{3,2, n} = 1 \times \begin{pmatrix}
0&1\\
1&0
\end{pmatrix}  \text{ or } -1 \times \begin{pmatrix}
0&1\\
1&0
\end{pmatrix} $ $ = \pm \begin{pmatrix}
0&1\\
1&0
\end{pmatrix}$. So, the character is $\chi_{50} =0$.
  \item $\rho_{51} = \rho_{3,1, m} \otimes \rho_{3,2, n} = 1 \times \begin{pmatrix}
0&1\\
1&0
\end{pmatrix} \text{ or } -1 \times \begin{pmatrix}
0&1\\
1&0
\end{pmatrix} $ $ = \pm \begin{pmatrix}
0&1\\
1&0
\end{pmatrix}$. So, the character is $\chi_{51} =0$.
  \item $\rho_{52} = \rho_{4,1, m} \otimes \rho_{3,2, n} = 1 \times \begin{pmatrix}
0&1\\
1&0
\end{pmatrix} \text{ or } -1 \times \begin{pmatrix}
0&1\\
1&0
\end{pmatrix} $ $ = \pm \begin{pmatrix}
0&1\\
1&0
\end{pmatrix}$. So, the character is $\chi_{52} =0$.
  \item $\rho_{53} = \rho_{1,2, m} \otimes \rho_{3,2, n} = \begin{pmatrix}
e^{\frac{2 \pi i k_m}{m}}&0\\
0&e^{-\frac{2 \pi i k_m}{m}}
\end{pmatrix} \otimes \begin{pmatrix}
0&1\\
1&0
\end{pmatrix}$ $ = \left(
\begin{array}{cccc}
 0 & e^{\frac{2 i k_m \pi }{m}} & 0 & 0 \\
 e^{\frac{2 i k_m \pi }{m}} & 0 & 0 & 0 \\
 0 & 0 & 0 & e^{-\frac{1}{m} 2 i k_4 \pi } \\
 0 & 0 & e^{-\frac{1}{m} 2 i k_4 \pi } & 0 \\
\end{array}
\right)$. So, the character is $\chi_{53} =0$.
  \item $\rho_{54} = \rho_{2,2, m} \otimes \rho_{3,2, n} = \begin{pmatrix}
e^{\frac{2 \pi i k_m l_m}{m}}&0\\
0&e^{-\frac{2 \pi i k_m l_m}{m}}
\end{pmatrix} \otimes \begin{pmatrix}
0&1\\
1&0
\end{pmatrix} $ \\ $ = \left(
\begin{array}{cccc}
 0 & e^{\frac{1}{m} 2 i k_m l_m \pi } & 0 & 0 \\
 e^{\frac{1}{m} 2 i k_m l_m \pi } & 0 & 0 & 0 \\
 0 & 0 & 0 & e^{-\frac{1}{m} 2 i k_m l_m \pi } \\
 0 & 0 & e^{-\frac{1}{m} 2 i k_m l_m \pi } & 0 \\
\end{array}
\right)$. So, the character is $\chi_{54} =0$.
  \item $\rho_{55} = \rho_{3,2, m} \otimes \rho_{3,2, n} = \begin{pmatrix}
0&1\\
1&0
\end{pmatrix}  \otimes \begin{pmatrix}
0&1\\
1&0
\end{pmatrix} = \left(
\begin{array}{cccc}
 0 & 0 & 0 & 1 \\
 0 & 0 & 1 & 0 \\
 0 & 1 & 0 & 0 \\
 1 & 0 & 0 & 0 \\
\end{array}
\right)$. So, the character is $\chi_{55} =0$.
  \item $\rho_{56} = \rho_{4,2, m} \otimes \rho_{3,2, n} = \begin{pmatrix}
0&e^{\frac{2 \pi i k_m l_m}{m}}\\
e^{-\frac{2 \pi i k_m l_m}{m}}&0
\end{pmatrix}  \otimes \begin{pmatrix}
0&1\\
1&0
\end{pmatrix}$      \\ $ = \left(
\begin{array}{cccc}
 0 & 0 & 0 & e^{\frac{1}{m} 2 i k_m l_m \pi } \\
 0 & 0 & e^{\frac{1}{m} 2 i k_m l_m \pi } & 0 \\
 0 & e^{-\frac{1}{m} 2 i k_m l_m \pi } & 0 & 0 \\
 e^{-\frac{1}{m} 2 i k_m l_m \pi } & 0 & 0 & 0 \\
\end{array}
\right)$. So, the character is $\chi_{56} = 0$.
  \item $\rho_{57} = \rho_{1,1, m} \otimes \rho_{4,2, n} = 1 \times \begin{pmatrix}
0&e^{\frac{2 \pi i k_n l_n}{n}}\\
e^{-\frac{2 \pi i k_n l_n}{n}}&0
\end{pmatrix}$ $ = \begin{pmatrix}
0&e^{\frac{2 \pi i k_n l_n}{n}}\\
e^{-\frac{2 \pi i k_n l_n}{n}}&0
\end{pmatrix}$. So, the character is $\chi_{57} =0$.
  \item $\rho_{58} = \rho_{2,1, m} \otimes \rho_{4,2, n} = 1 \times \begin{pmatrix}
0&e^{\frac{2 \pi i k_n l_n}{n}}\\
e^{-\frac{2 \pi i k_n l_n}{n}}&0
\end{pmatrix} \text{ or } -1 \times  \begin{pmatrix}
0&e^{\frac{2 \pi i k_n l_n}{n}}\\
e^{-\frac{2 \pi i k_n l_n}{n}}&0
\end{pmatrix}$ \\$ = \pm \begin{pmatrix}
0&e^{\frac{2 \pi i k_n l_n}{n}}\\
e^{-\frac{2 \pi i k_n l_n}{n}}&0
\end{pmatrix}$. So, the character is $\chi_{58} =0$.
  \item $\rho_{59} = \rho_{3,1, m} \otimes \rho_{4,2, n} = 1 \times \begin{pmatrix}
0&e^{\frac{2 \pi i k_n l_n}{n}}\\
e^{-\frac{2 \pi i k_n l_n}{n}}&0
\end{pmatrix} \text{ or } -1 \times \begin{pmatrix}
0&e^{\frac{2 \pi i k_n l_n}{n}}\\
e^{-\frac{2 \pi i k_n l_n}{n}}&0
\end{pmatrix} $\\ $ = \pm \begin{pmatrix}
0&e^{\frac{2 \pi i k_n l_n}{n}}\\
e^{-\frac{2 \pi i k_n l_n}{n}}&0
\end{pmatrix}$. So, the character is $\chi_{59} =0$.
  \item $\rho_{60} = \rho_{4,1, m} \otimes \rho_{4,2, n} = 1 \times \begin{pmatrix}
0&e^{\frac{2 \pi i k_n l_n}{n}}\\
e^{-\frac{2 \pi i k_n l_n}{n}}&0
\end{pmatrix} \text{ or } -1 \times \begin{pmatrix}
0&e^{\frac{2 \pi i k_n l_n}{n}}\\
e^{-\frac{2 \pi i k_n l_n}{n}}&0
\end{pmatrix} $ \\$ = \pm \begin{pmatrix}
0&e^{\frac{2 \pi i k_n l_n}{n}}\\
e^{-\frac{2 \pi i k_n l_n}{n}}&0
\end{pmatrix}$. So, the character is $\chi_{60} =0$.
  \item $\rho_{61} = \rho_{1,2, m} \otimes \rho_{4,2, n} = \begin{pmatrix}
e^{\frac{2 \pi i k_m}{m}}&0\\
0&e^{-\frac{2 \pi i k_m}{m}}
\end{pmatrix} \otimes \begin{pmatrix}
0&e^{\frac{2 \pi i k_n l_n}{n}}\\
e^{-\frac{2 \pi i k_n l_n}{n}}&0
\end{pmatrix}$ \\ $ = \left(
\begin{array}{cccc}
 0 & e^{\frac{2 i \pi  k_m}{m}+\frac{2 i k_n  l_n  \pi }{n}} & 0 & 0 \\
 e^{\frac{2 i k_m \pi }{m}-\frac{2 i k_n  l_n  \pi }{n}} & 0 & 0 & 0 \\
 0 & 0 & 0 & e^{\frac{2 i k_n  l_n  \pi }{n}-\frac{2 i k_m \pi }{m}} \\
 0 & 0 & e^{-\frac{2 i \pi  k_m}{m}-\frac{2 i k_n  l_n  \pi }{n}} & 0 \\
\end{array}
\right)$. So, the character is $\chi_{61} =0$.
  \item $\rho_{62} = \rho_{2,2, m} \otimes \rho_{4,2, n} = \begin{pmatrix}
e^{\frac{2 \pi i k_m l_m}{m}}&0\\
0&e^{-\frac{2 \pi i k_m l_m}{m}}
\end{pmatrix} \otimes \begin{pmatrix}
0&e^{\frac{2 \pi i k_n l_n}{n}}\\
e^{-\frac{2 \pi i k_n l_n}{n}}&0
\end{pmatrix}$ \\ $ = \left(
\begin{array}{cccc}
 0 & e^{\frac{2 i k_m \pi  l_m}{m}+\frac{2 i k_n  l_n  \pi }{n}} & 0 & 0 \\
 e^{\frac{2 i k_m l_m \pi }{m}-\frac{2 i k_n  l_n  \pi }{n}} & 0 & 0 & 0 \\
 0 & 0 & 0 & e^{\frac{2 i k_n  l_n  \pi }{n}-\frac{2 i k_m l_m \pi }{m}} \\
 0 & 0 & e^{-\frac{2 i k_m \pi  l_m}{m}-\frac{2 i k_n  l_n  \pi }{n}} & 0 \\
\end{array}
\right)$. So, the character is $\chi_{62} =0$.
  \item $\rho_{63} = \rho_{3,2, m} \otimes \rho_{4,2, n} = \begin{pmatrix}
0&1\\
1&0
\end{pmatrix} \otimes \begin{pmatrix}
0&e^{\frac{2 \pi i k_n l_n}{n}}\\
e^{-\frac{2 \pi i k_n l_n}{n}}&0
\end{pmatrix}$   \\ $ = \left(
\begin{array}{cccc}
 0 & 0 & 0 & e^{\frac{1}{n} 2 i k_n l_n \pi } \\
 0 & 0 & e^{-\frac{1}{n} 2 i k_n l_n \pi } & 0 \\
 0 & e^{\frac{1}{n} 2 i k_n l_n \pi } & 0 & 0 \\
 e^{-\frac{1}{n} 2 i k_n l_n \pi } & 0 & 0 & 0 \\
\end{array}
\right)$. So, the character is $\chi_{63} =0$.
  \item $\rho_{64} = \rho_{4,2, m} \otimes \rho_{4,2, n} = \begin{pmatrix}
0&e^{\frac{2 \pi i k_m l_m}{m}}\\
e^{-\frac{2 \pi i k_m l_m}{m}}&0
\end{pmatrix} \otimes \begin{pmatrix}
0&e^{\frac{2 \pi i k_n l_n}{n}}\\
e^{-\frac{2 \pi i k_n l_n}{n}}&0
\end{pmatrix}$  \\ = $ \left(
\begin{array}{cccc}
 0 & 0 & 0 & e^{\frac{2 i k_m \pi  l_m}{m}+\frac{2 i k_n  l_n  \pi }{n}} \\
 0 & 0 & e^{\frac{2 i k_m l_m \pi }{m}-\frac{2 i k_n  l_n  \pi }{n}} & 0 \\
 0 & e^{\frac{2 i k_n  l_n  \pi }{n}-\frac{2 i k_m l_m \pi }{m}} & 0 & 0 \\
 e^{-\frac{2 i k_m \pi  l_m}{m}-\frac{2 i k_n  l_n  \pi }{n}} & 0 & 0 & 0 \\
\end{array}
\right)$.\\ So, the character is $\chi_{64} =0$.        
\end{enumerate}

\end{document}